\documentclass[aps,prl,twocolumn,notitlepage,superscriptaddress,showpacs,nofootinbib]{revtex4-1}

\usepackage{amsmath,amssymb,amsthm,mathtools}
\usepackage{booktabs,array,enumitem,microtype}
\usepackage{xcolor}
\usepackage[colorlinks=true,allcolors=blue!55!black]{hyperref}
\usepackage[T1]{fontenc}
\usepackage{mathptmx}

\setlist{nosep,leftmargin=*}

\newcommand{\F}{\mathbb F}
\newcommand{\C}{\mathbb C}
\newcommand{\Z}{\mathbb Z}
\newcommand{\cH}{\mathcal H}
\newcommand{\cU}{\mathcal U}
\newcommand{\cB}{\mathcal B}
\newcommand{\cC}{\mathcal C}
\newcommand{\cG}{\mathcal G}
\newcommand{\cV}{\mathcal V}
\newcommand{\BSEP}{\operatorname{BSEP}}
\newcommand{\PPT}{\operatorname{PPT}}
\newcommand{\suppcomp}{\operatorname{supp}_{\rm comp}}
\newcommand{\Span}{\operatorname{span}}
\newcommand{\ran}{\operatorname{ran}}

\newcommand{\Tr}{\operatorname{Tr}}
\newcommand{\ket}[1]{\lvert #1\rangle}
\newcommand{\bra}[1]{\langle #1\rvert}
\newcommand{\braket}[2]{\langle #1\vert #2\rangle}
\newcommand{\proj}[1]{\lvert #1\rangle\!\langle #1\rvert}
\newcommand{\transpose}{\mathsf T}
\newcommand{\abs}[1]{\lvert #1\rvert}
\newcommand{\norm}[1]{\lVert #1\rVert}

\newtheorem{theorem}{Theorem}
\newtheorem{lemma}{Lemma}
\newtheorem{proposition}{Proposition}
\newtheorem{corollary}{Corollary}

\theoremstyle{definition}

\makeatletter

\begin{document}
	
	\title{Genuinely Unextendible Product Bases from Maximum Distance Separable Codes}
	\author{Mao-Sheng Li}
	\email{li.maosheng.math@gmail.com}
	\affiliation{School of Mathematics, South China University of Technology, Guangzhou 510641, China}
	
	\begin{abstract}
	The existence of genuinely unextendible product bases (GUPBs), incomplete orthogonal sets of fully product states whose orthogonal complements contain no product vector across any bipartition, has remained an open problem. Here we construct GUPBs for any number $N\geq3$ of parties using classical maximum distance separable (MDS) codes. The MDS property imposes a rigidity on the induced product tiling across every bipartition; combined with Fourier mode deletion and a stopper state, this rigidity enforces genuine unextendibility. Consequently, the orthogonal complement of each GUPB is a genuinely entangled subspace whose normalized projector is invariant under partial transposition across every bipartition, yielding an explicit family of multipartite bound entangled states. We further construct GME witnesses that detect these states even though no fully decomposable witness can do so. Moreover, the resulting indistinguishability  persists under arbitrary finite tensor powers and measurements separable across any bipartition. These results establish a direct connection between error-correcting codes and multipartite entanglement and provide an algebraic route to certifying genuinely multipartite bound entanglement.

	\end{abstract}
	
	\maketitle
	
	\noindent\emph{Introduction.---}
	Orthogonal product states can exhibit nonclassical behavior even though none
	of the states is entangled.  Unextendible product bases (UPBs) are a canonical
	example: an incomplete orthogonal family of fully product vectors is
	unextendible when no further fully product vector is orthogonal to the whole
	family \cite{Bennett1999,DiVincenzo2003,Pittenger2003}.  The orthogonal
	complement is then a completely entangled subspace and its normalized
	projector yields a positive partial transpose (PPT) bound entangled state
	\cite{Bennett1999,Horodecki1998,BejHalder2021}.  UPBs are also tied to local
	state discrimination, data hiding, and Bell inequalities with no quantum
	violation \cite{BennettNWE1999,DeRinaldis2004,Hayden2005,Cohen2008,
		Augusiak2011,Augusiak2012}.  Their construction and classification have in
	turn led to combinatorial, graph-theoretic, and tile-based descriptions
	\cite{AlonLovasz2001,Feng2006,Johnston2014,ChenJohnston2015,
		ChenDjokovic2018,Yang2015,ShiTiles2020,WangChen2020,You2023,ShiGraph2023}.
	
	Multipartite systems sharpen the question.  A vector may be entangled over
	all individual parties and nevertheless factor across some bipartition
	$X|Y$.  A subspace containing no such vector for any nontrivial bipartition
	is a genuinely entangled subspace (GES)
	\cite{Parthasarathy2004,Bhat2006,Demianowicz2018,Agrawal2019,
		Demianowicz2019,Demianowicz2020,DemianowiczRajchel2021,Antipin2021,
		DemianowiczUniversal2022,JohnstonHierarchy2022}.  Accordingly, a
	\emph{genuinely unextendible product basis} (GUPB) is an orthogonal family of
	fully product vectors whose orthogonal complement is a GES, or equivalently,
	a family that admits no product extension across any bipartition.  This is
	strictly stronger than local irreducibility, strong nonlocality, or
	uncompletability after grouping parties
	\cite{Niset2006,Halder2019,Yuan2020,ShiHetero2022,ShiNPartite2022,
		ShiStrong2022,ZhouPlane2022,He2024,ShiNJP2022,Zhen2024}.  The difficulty is
	simultaneous: one and the same mutually orthogonal product family must remain
	unextendible for every bipartite grouping.  Existing work on the orthogonal
	problem has therefore been dominated by constraints and no-go results,
	including forbidden cardinalities \cite{Demianowicz2022}, graph-theoretic
	restrictions \cite{ShiGraph2023}, and the recent exclusion of the smallest
	three-qutrit candidate \cite{Demianowicz2026}.  No finite-dimensional
	orthogonal GUPB was previously known.
	
	The existence question has an immediate entanglement-theoretic consequence.
	For an orthogonal family of fully product vectors, partial transposition with
	respect to any subset of parties preserves the corresponding orthogonal
	projector.  Hence the normalized projector onto the complement of a GUPB is
	PPT with respect to every bipartition while being supported on a GES.  It is
	therefore genuinely multipartite bound entangled and lies inside the
	PPT-mixture relaxation, so every fully decomposable genuine multipartite entanglement (GME) witness is
	nonnegative on it \cite{Jungnitsch2011}; more general nondecomposable
	positive-map constructions show that PPT genuine multipartite entanglement
	can nevertheless be detected \cite{Huber2010,HuberSengupta2014}.  Thus an
	orthogonal GUPB would provide, from product vectors alone, a canonical sector
	of PPT genuine multipartite entanglement together with witnesses naturally
	adapted to that sector.
	
	The key technical ingredient of our solution is \emph{MDS minor rigidity}.
	A linear $[n,k,d]_p$ code is maximum distance separable (MDS) when it saturates
	the Singleton bound, $d=n-k+1$, equivalently when every $k$ columns of a
	full rank generator matrix are linearly independent \cite{Singleton1964};
	Reed--Solomon codes provide the standard explicit family
	\cite{ReedSolomon1960}.  MDS codes, orthogonal arrays, and related
	maximum-distance quantum codes have long been used in quantum error
	correction and in constructions of highly entangled states
	\cite{Scott2004,GourWallach2007,Goyeneche2014,Goyeneche2015,
		HuberGrassl2020}.  Here the same minor condition is used in a different way.
	For an $[N^2,N,N^2-N+1]_p$ generator matrix, transversal minors assign a unique tile to every computational cell, while the minors obtained by replacing one selected generator column control the linear transformations between neighboring fixed-address slices. Across an arbitrary bipartition, these transformations detect every displacement on the opposite subsystem and force any Cartesian union containing more than one tile to fill the entire computational grid. This bipartition-independent rectangle rigidity is the crucial structural consequence of the MDS property that enforces genuine unextendibility.
	
	We prove that every linear $[N^2,N,N^2-N+1]_p$ MDS code over a prime field
	produces a GUPB in $(\mathbb C^{Np})^{\otimes N}$ for every $N\geq3$;
	generalized Reed--Solomon codes therefore yield an explicit infinite family
	for all primes $p\geq N^2$.  On each tile we introduce local Fourier modes,
	delete one uniform mode, and add a global stopper. Product structure makes its
	support a Cartesian rectangle, and MDS rigidity leaves only a single tile or
	the full grid, both excluded by the stopper.  The complementary projector is
	invariant under partial transposition with respect to every bipartition, its
	canonical GME witness detects a full-rank PPT-GME family while being
	nondecomposable for every bipartition, and genuine unextendibility persists
	under arbitrary finite partywise tensor powers.  The construction therefore
	settles the orthogonal GUPB existence problem and identifies MDS minor
	rigidity as the key structural property underlying the construction.
	
	\medskip
	\noindent\emph{Construction GUPBs from MDS codes.---}
	We consider $N\geq3$ parties indexed by $P\in\Z_N$. The construction starts
	from a classical MDS code. Recall that a linear
	$[n,k,d]_p$ code $\cC\subseteq\F_p^n$ is MDS if it saturates the Singleton
	bound,
	\begin{equation}
		d=n-k+1.
		\label{eq:mds-singleton}
	\end{equation}
	Equivalently, if $\mathsf G\in\F_p^{k\times n}$ is a full-rank generator
	matrix of $\cC$, every $k$ columns of $\mathsf G$ are linearly independent
	\cite{Singleton1964}. We use
	\begin{equation}
		n=N^2,\qquad k=N,
		\label{eq:mds-parameters}
	\end{equation}
	so that $\cC$ has parameters $[N^2,N,N^2-N+1]_p$. The MDS condition will be
	used twice: first to partition the computational basis uniquely into product
	tiles, and then to establish the rigidity of such tiles with respect to every
	bipartition.
	
	Index the $N^2$ columns of the generator matrix by
	$(P,j)\in\Z_N\times\Z_N$,
	\begin{equation}
		\mathsf G
		=
		\bigl[
		\mathbf g_{0,0}\ \mathbf g_{0,1}\ \cdots\
		\mathbf g_{N-1,N-1}
		\bigr]
		\in\F_p^{N\times N^2},
		\label{eq:generator-matrix}
	\end{equation}
	where $\mathbf g_{P,j}\in\F_p^N$. For a message $\mathbf t\in\F_p^N$, let
	\begin{equation}
		\mathbf c(\mathbf t)
		=
		\mathbf t^{\transpose}\mathsf G
		\in\cC,
		\qquad
		c_{P,j}(\mathbf t)
		=
		\mathbf t^{\transpose}\mathbf g_{P,j}
		\label{eq:codeword-coordinate}
	\end{equation}
	be the corresponding codeword and its $(P,j)$th coordinate.
	
	At party $P\in\Z_N$, take
	\begin{equation}
		\cH_P
		=
		\Span\{\ket{j,r}_P:j\in\Z_N,\ r\in\F_p\}
		\cong\C^{Np}.
		\label{eq:local-space}
	\end{equation}
	Each message $\mathbf t$ determines a local set
	\begin{equation}
		F_P(\mathbf t)
		=
		\{(j,c_{P,j}(\mathbf t)):j\in\Z_N\}
		\label{eq:local-fiber}
	\end{equation}
	and hence a product tile
	\begin{equation}
		T_{\mathbf t}
		=
		\prod_{P\in\Z_N}F_P(\mathbf t).
		\label{eq:tile}
	\end{equation}
	For later use, if $Z\subseteq\Z_N$ is a set of parties, write $
	F_Z(\mathbf t)=\prod_{P\in Z}F_P(\mathbf t).
	$
	The $p^N$ tiles $T_{\mathbf t}$ partition the full computational grid
	$(\Z_N\times\F_p)^N$. Indeed, a computational cell
	$((j_P,r_P))_{P\in\Z_N}$ belongs to $T_{\mathbf t}$ precisely when
	\begin{equation}
		c_{P,j_P}(\mathbf t)=r_P,
		\qquad P\in\Z_N.
		\label{eq:tile-linear-system}
	\end{equation}
	This is a system of $N$ linear equations for $\mathbf t$. Its coefficient
	matrix consists of the transposes of the $N$ selected columns
	$\mathbf g_{P,j_P}$, which are linearly independent by the MDS property.
	Hence Eq.~\eqref{eq:tile-linear-system} has a unique solution $\mathbf t$.
	
	We now place an orthogonal product basis on each tile. Let
	$\zeta=e^{2\pi i/N}$ and define
	\begin{equation}
		\ket{f_{P,\mathbf t}(a)}
		=
		\sum_{j=0}^{N-1}
		\zeta^{aj}
		\ket{j,c_{P,j}(\mathbf t)}_P,
		\qquad
		a\in\Z_N.
		\label{eq:local-fourier}
	\end{equation}
	For fixed $\mathbf t$,
	\begin{equation}
		\cB_{\mathbf t}
		=
		\left\{
		\bigotimes_P\ket{f_{P,\mathbf t}(a_P)}
		:
		a_P\in\Z_N
		\right\} \subseteq   \bigotimes_{P\in \mathbb{Z}_N}\mathcal{H}_P.
		\label{eq:tile-basis}
	\end{equation}
	Delete the all zero Fourier mode
	\begin{equation}
		\ket{\psi_{\mathbf t}}
		=
		\bigotimes_{P\in\Z_N}\ket{f_{P,\mathbf t}(0)}
		\label{eq:tile-hole}
	\end{equation}
	and add the stopper
	\begin{equation}
		\ket S
		=
		\bigotimes_{P\in\Z_N}
		\left(
		\sum_{j\in\Z_N}\sum_{r\in\F_p}\ket{j,r}_P
		\right).
		\label{eq:stopper}
	\end{equation}
	Our candidate is
	\begin{equation}
		\cU_{\cC}
		=
		\bigcup_{\mathbf t\in\F_p^N}
		\left(
		\cB_{\mathbf t}\setminus\{\ket{\psi_{\mathbf t}}\}
		\right)
		\cup\{\ket S\}.
		\label{eq:gupb-set}
	\end{equation}
	Normalization factors are omitted throughout. Different tiles have disjoint
	computational support, while every retained vector in $\cB_{\mathbf t}$ has
	at least one nonzero Fourier label and is therefore orthogonal to $\ket S$.
	Thus $\cU_{\cC}$ is an orthogonal family of fully product states, with
	\begin{equation}
		\abs{\cU_{\cC}}
		=
		p^N(N^N-1)+1,
		\qquad
		\dim\Span(\cU_{\cC})^\perp
		=
		p^N-1.
		\label{eq:size-count}
	\end{equation}
	
	The main result is the following.
	
	\begin{theorem}[GUPBs from MDS codes]
		\label{thm:gupb-existence}
		Let $N\geq3$, let $p$ be prime, and let $\cC$ be a linear
		$[N^2,N,N^2-N+1]_p$ MDS code. Then the family $\cU_{\cC}$ defined in
		Eq.~\eqref{eq:gupb-set} is a genuinely unextendible product basis in
		$(\C^{Np})^{\otimes N}$.
	\end{theorem}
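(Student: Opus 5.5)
Fix a nontrivial bipartition $X|Y$ of $\Z_N$ and suppose $\ket{\alpha}_X\otimes\ket{\beta}_Y$ is a nonzero vector orthogonal to every element of $\cU_{\cC}$. The plan is to show that its computational support is forced to be either a single tile or the whole grid, and then to exclude both cases using the stopper.

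\textbf{Step 1 (restriction to a tile).} On each tile $T_{\mathbf t}$, orthogonality to all retained Fourier vectors of $\cB_{\mathbf t}$ means that the restriction of $\alpha\otimes\beta$ to $\Span\{\ket{x}:x\in T_{\mathbf t}\}$ is proportional to $\ket{\psi_{\mathbf t}}$. Since $\ket{\psi_{\mathbf t}}$ has constant coefficients on $T_{\mathbf t}$, the coefficient $\alpha(x_X)\beta(x_Y)$ equals a constant $\lambda_{\mathbf t}$ on the cells of $T_{\mathbf t}$.

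Now write $A=\suppcomp\alpha\subseteq\prod_{P\in X}(\Z_N\times\F_p)$ and $B=\suppcomp\beta$. The support $A\times B$ is a Cartesian rectangle. By the constancy just established, it is a union of whole tiles, namely those with $\lambda_{\mathbf t}\neq0$.

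\textbf{Step 2 (rigidity).} Here we show that a Cartesian rectangle $A\times B$ which is a union of tiles is either a single tile or the full grid. This is the main obstacle, and we attack it with MDS minors.

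\begin{enumerate}
\item Suppose $A\times B$ contains two distinct tiles $T_{\mathbf t}$ and $T_{\mathbf t'}$. Then $A\supseteq F_X(\mathbf t)\cup F_X(\mathbf t')$ and $B\supseteq F_Y(\mathbf t)\cup F_Y(\mathbf t')$.
\item The rectangle also contains mixed cells $(x_X,y_Y)$ with $x_X\in F_X(\mathbf t)$ and $y_Y\in F_Y(\mathbf t')$. Each such cell lies in a unique tile $T_{\mathbf s}$ (the partition fact). Being a union of tiles, $A\times B$ then contains $T_{\mathbf s}$, which enlarges $A$ and $B$.
\item To control $\mathbf s$, fix the addresses $j_P$ for all parties. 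The map from "coordinates $r$" to $\mathbf t$ is an invertible linear map, given by the inverse of the $N\times N$ matrix of the selected columns.
\item Changing the address at one party in $Y$ replaces one column. The MDS property applied to the swapped $N$-set keeps the matrix invertible. We use it to show that the difference $\mathbf s-\mathbf t$ ranges over a set of displacements that, by invertibility, generates $\F_p^N$.
\item Iterating the closure, with primality of $p$ ensuring that any nonzero displacement generates all of $\F_p$, forces $A$ and $B$ to contain every local fiber. Hence $A\times B$ is the whole grid.
\end{enumerate}

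I would first try to prove the key lemma that for $x_X\in F_X(\mathbf t)$ the set of $\mathbf s$ with $x_X\in F_X(\mathbf s)$ is an affine subspace of dimension $N-|X|$. The closure argument then reduces to showing that these affine subspaces, for varying addresses, span everything. I expect $N\ge3$ to be needed so that each side of the bipartition has a free address to vary while another party stays fixed.

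\textbf{Step 3 (stopper).} If the support is a single tile $T_{\mathbf t}$, then $\alpha\otimes\beta\propto\ket{\psi_{\mathbf t}}$, whose overlap with $\ket S$ is nonzero, a contradiction. If the support is the whole grid, then every $\lambda_{\mathbf t}\neq0$, and the condition $\lambda_{\mathbf t}=\alpha(x_X)\beta(x_Y)$ holds on every tile. Comparing cells shared across tiles by the partition property, together with connectivity from Step 2, shows that $\alpha$ and $\beta$ are constant, so $\alpha\otimes\beta\propto\ket S$. This contradicts orthogonality to $\ket S\in\cU_{\cC}$.

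Hence no bipartition admits a product vector in the complement, and $\cU_{\cC}$ is a GUPB. In the full-grid case, the constancy of $\alpha$ and $\beta$ should come out of the same transition-map argument as Step 2: the ratio $\alpha(x)/\alpha(x')$ is forced to be $1$ whenever $x,x'$ are linked through a common $y$ across tiles.
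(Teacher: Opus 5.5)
\textbf{Gap 1: the rigidity step.} Your architecture matches the paper's. The support of an $X|Y$ product vector in $\cU_{\cC}^\perp$ is a Cartesian rectangle that is a union of whole tiles; rigidity leaves one tile or the full grid; the stopper excludes both. Step 1 and the single-tile case are fine. The genuine gap is Step 2, which is the real content of the theorem, because the targets you set do not suffice.

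That the displacements $\mathbf s-\mathbf t$ generate $\F_p^N$, or that the affine spaces $V_{x_X}=\{\mathbf s: x_X\in F_X(\mathbf s)\}$ span everything, says nothing about $\Lambda$. The set $\Lambda$ is not closed under addition, and it meets each $V_{x_X}$ only partially. In a fixed-address slice with $M=M_{\mathbf j}$ and $M\Lambda=L_X\times L_Y$, one has $M(\Lambda\cap V_{x_X})=\{\mathbf a\}\times L_Y$, and $L_Y=\F_p^Y$ is exactly what must be proved.

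The missing mechanism is to use Cartesianity in two slices that differ at one party $P\in X$, with address $j_P$ replaced by $k$.
\begin{itemize}
\item With $\boldsymbol\gamma=\mathbf g_{P,k}^{\transpose}M^{-1}$, the map $M'M^{-1}$ changes only the $P$-coordinate: $s\mapsto\gamma_P s+\boldsymbol\gamma_{X\setminus\{P\}}\cdot\mathbf u+\boldsymbol\gamma_Y\cdot\mathbf b$.
\item Apply your mixing closure first in the $M'$-slice, then back in the $M$-slice. This sends $(\mathbf u,s,\mathbf b)\in M\Lambda$ to $(\mathbf u,s+\delta,\mathbf b)$, with the \emph{fixed} shift $\delta=\gamma_P^{-1}\boldsymbol\gamma_Y\cdot(\mathbf b-\mathbf b')$ for any $\mathbf b'\in L_Y$.
\item Hence every fiber $\{s:(\mathbf u,s)\in L_X\}$ is invariant under translation by $\delta$. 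Only at this point does primality of $p$ give a full fiber.
\end{itemize}

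This needs two MDS facts you do not supply.
\begin{itemize}
\item $\gamma_P\neq0$, which follows from a one-column swap.
\item For $\mathbf b\neq\mathbf b'$, some $k\neq j_P$ makes $\delta\neq0$. To see this, let $\mathbf w=M^{-1}\mathbf v$, where $\mathbf v$ has zero $X$-coordinates and $Y$-coordinates $\mathbf b-\mathbf b'$. Then $\mathbf g_{Q,j_Q}^{\transpose}\mathbf w=0$ for all $Q\in X$, and $\boldsymbol\gamma_Y\cdot(\mathbf b-\mathbf b')=\mathbf g_{P,k}^{\transpose}\mathbf w$. If this vanished for every $k\neq j_P$, the nonzero $\mathbf w$ would be annihilated by $\abs X+N-1\geq N$ distinct generator columns, which is impossible.
\end{itemize}
The filling then proceeds in two stages. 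Two distinct tiles give $\abs{L_X}\geq2$ or $\abs{L_Y}\geq2$. Fill the opposite side first; it then has at least two points, which fills the remaining side.

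\textbf{Gap 2: constancy in the full-grid case.} The connectivity you need in Step 3 is not a by-product of Step 2. The mechanism you propose is also wrong as stated. If $(x,y)\in T_{\mathbf t}$ and $(x',y)\in T_{\mathbf t'}$ with $\mathbf t\neq\mathbf t'$, then $\alpha(x)/\alpha(x')=\lambda_{\mathbf t}/\lambda_{\mathbf t'}$, which need not be $1$.

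What does hold is that $\alpha$ is constant on each projected tile $F_X(\mathbf t)$: both $(x,y)$ and $(x',y)$ lie in the same $T_{\mathbf t}$, and $\beta(y)\neq0$. You then need a separate lemma. Consider the graph on $X$-cells in which two cells are joined when they lie in a common $F_X(\mathbf t)$; this graph must be shown connected. The paper proves this directly from MDS.
\begin{itemize}
\item Two cells differing only at $P$, with addresses $j\neq j'$, impose $\abs X+1\leq N$ linearly independent conditions on $\mathbf t$, which are therefore solvable.
\item Cells with the same address at $P$ are joined through an intermediate cell that has a different address.
\end{itemize}
The same argument applies to $Y$.
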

	
	First, observe that
	\begin{equation}
		\left[
		\bigcup_{\mathbf t\in\F_p^N}
		\left(
		\cB_{\mathbf t}\setminus\{\ket{\psi_{\mathbf t}}\}
		\right)
		\right]^\perp
		=
		\Span\left\{
		\ket{\psi_{\mathbf t}}:
		\mathbf t\in\F_p^N
		\right\},
		\label{eq:deleted-mode-span}
	\end{equation}
	and, with the normalization factors omitted as above,
	\begin{equation}
		\ket S
		=
		\sum_{\mathbf t\in\F_p^N}
		\ket{\psi_{\mathbf t}}.
		\label{eq:stopper-sum}
	\end{equation}
	Therefore every nonzero vector
	$\ket{\psi}\in\cU_{\cC}^{\perp}$ can be written uniquely as
	\begin{equation}
		\ket{\psi}
		=
		\sum_{\mathbf t\in\F_p^N}
		a_{\mathbf t}\ket{\psi_{\mathbf t}},
		\qquad
		\sum_{\mathbf t\in\F_p^N}a_{\mathbf t}=0.
		\label{eq:complement-expansion-proof}
	\end{equation}
	Define the set of active tiles by
	\begin{equation}
		\Lambda(\psi)
		=
		\left\{
		\mathbf t\in\F_p^N:
		a_{\mathbf t}\neq0
		\right\}.
		\label{eq:active-set-proof}
	\end{equation}
	Since $\ket{\psi}\neq0$ and the coefficients in
	Eq.~\eqref{eq:complement-expansion-proof} sum to zero,
	\begin{equation}
		|\Lambda(\psi)|\geq2.
		\label{eq:at-least-two-active-proof}
	\end{equation}
	
	Suppose now that $\ket{\psi}$ is product across a nontrivial bipartition
	$X|Y$,
	\begin{equation}
		\ket{\psi}
		=
		\ket{\alpha}_X\otimes\ket{\beta}_Y.
		\label{eq:product-assumption-proof}
	\end{equation}
	For any state $\ket{\phi}$, let
	$\operatorname{supp}_{\rm comp}(\phi)$ denote the set of computational
	basis cells on which $\ket{\phi}$ has a nonzero coefficient. Since each
	$\ket{\psi_{\mathbf t}}$ has nonzero coefficients on every cell of
	$T_{\mathbf t}$ and vanishes outside $T_{\mathbf t}$, while different
	tiles have disjoint support, Eq.~\eqref{eq:complement-expansion-proof}
	gives
	\begin{equation}
		\operatorname{supp}_{\rm comp}(\psi)
		=
		\bigcup_{\mathbf t\in\Lambda(\psi)}
		T_{\mathbf t}.
		\label{eq:support-union-proof}
	\end{equation}
	On the other hand, the support of a bipartite product vector factorizes:
	\begin{equation}
		\operatorname{supp}_{\rm comp}(\psi)
		=
		\operatorname{supp}_{\rm comp}(\alpha)
		\times
		\operatorname{supp}_{\rm comp}(\beta).
		\label{eq:support-product-proof}
	\end{equation}
	Indeed, the coefficient of the computational basis vector
	$\ket{q_X,q_Y}$ is
	$\alpha(q_X)\beta(q_Y)$, which is nonzero if and only if both factors
	are nonzero. Hence
	$\bigcup_{\mathbf t\in\Lambda(\psi)}T_{\mathbf t}$ is an
	$X|Y$ Cartesian rectangle.
	
	The crucial input is the following consequence of the MDS property.
	
	\begin{lemma}[MDS rectangle rigidity]
		\label{lem:rectangle-rigidity}
		Let $\Lambda\subseteq\F_p^N$ with $|\Lambda|\geq2$. If
		$
		\bigcup_{\mathbf t\in\Lambda}T_{\mathbf t}
		$
		is an $X|Y$ Cartesian rectangle, then
		\begin{equation}
			\Lambda=\F_p^N.
		\end{equation}
	\end{lemma}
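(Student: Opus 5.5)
The plan is to pass from the cell-level picture to the message space $\F_p^N$, where the MDS property turns each fixed-address slice into an invertible linear coordinate system. For an address $\mathbf j=(j_P)_{P\in\Z_N}\in\Z_N^N$, let $M_{\mathbf j}:\F_p^N\to\F_p^N$ be the map $\mathbf t\mapsto(c_{P,j_P}(\mathbf t))_{P}$. Its matrix has the columns $\mathbf g_{P,j_P}$, so $M_{\mathbf j}$ is invertible by the same MDS argument that gave uniqueness in Eq.~\eqref{eq:tile-linear-system}. Write $M_{\mathbf j}=(L_{\mathbf j},K_{\mathbf j})$ according to the split $X|Y$. Here $L_{\mathbf j}$ depends only on $\mathbf j_X$, and $K_{\mathbf j}$ depends only on $\mathbf j_Y$.

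The cells of $\bigcup_{\mathbf t\in\Lambda}T_{\mathbf t}$ with address $\mathbf j$ are exactly $M_{\mathbf j}(\Lambda)$. The rectangle hypothesis $A\times B$ restricts to each address. Hence, for every $\mathbf j$, the set $M_{\mathbf j}(\Lambda)$ is a Cartesian product $\pi_X M_{\mathbf j}(\Lambda)\times\pi_Y M_{\mathbf j}(\Lambda)$. The rectangle is also global across addresses, because $A$ and $B$ are single sets. This gives a cross-address closure property:
\begin{quote}
for all $\mathbf t,\mathbf s\in\Lambda$ and any addresses $\mathbf j_X,\mathbf j_Y$, there is $\mathbf u\in\Lambda$ with $L_{\mathbf j_X}\mathbf u=L_{\mathbf j_X}\mathbf t$ and $K_{\mathbf j_Y}\mathbf u=K_{\mathbf j_Y}\mathbf s$.
\end{quote}
Equivalently, $\mathbf u=\mathbf t+\mathbf w$ with $\mathbf w\in\ker L_{\mathbf j_X}$ and $K_{\mathbf j_Y}\mathbf w=K_{\mathbf j_Y}(\mathbf s-\mathbf t)$, and $\mathbf w$ is uniquely determined.

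The key steps, in order, are as follows.

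First, fix $\mathbf t\neq\mathbf s$ in $\Lambda$ (possible since $|\Lambda|\geq2$) and set $\mathbf d=\mathbf s-\mathbf t$. The codeword $\mathbf d^{\transpose}\mathsf G$ has weight at least $N^2-N+1$, so at most $N-1$ of its coordinates vanish. This lets us choose, for a party $Q\in Y$, slots $j_Q$ with $c_{Q,j_Q}(\mathbf d)\neq0$, so the displacement is visible on $Y$.

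Second, hold $\mathbf j_X$ fixed and change a single $j_Q$ ($Q\in Y$) to $j_Q'$. This replaces one generator column, and the MDS minors of the modified matrix show that the new correction vector $\mathbf w'$ is a nonzero multiple of a vector that differs from $\mathbf w$. Iterating over $Q$ and all $N$ choices of $j_Q$ yields elements $\mathbf t+\mathbf w$ of $\Lambda$ whose corrections $\mathbf w$ span $\ker L_{\mathbf j_X}$. Then, using $\mathbf u$ as a new base point and rescaling via different $\mathbf s$, I aim to show that $\mathbf t+\ker L_{\mathbf j_X}\subseteq\Lambda$.

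Third, swap the roles of $X$ and $Y$ to get $\mathbf t+\ker K_{\mathbf j_Y}\subseteq\Lambda$ for every $\mathbf j_Y$. Then vary $\mathbf j_X$, which is possible since $|X|,|Y|\geq1$ and the grid has $N\geq3$ slots per party. The kernels $\ker L_{\mathbf j_X}$ over all $\mathbf j_X$ sum to $\F_p^N$, because the intersection of their annihilators is zero: the columns $\mathbf g_{P,j}$, $P\in X$, $j\in\Z_N$, span $\F_p^N$ by MDS. So $\Lambda$ is closed under translations by a spanning set, and hence $\Lambda=\F_p^N$.

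A counting aid runs in parallel. Set $a(\mathbf j_X)=|\pi_X M_{\mathbf j}\Lambda|$ and $b(\mathbf j_Y)=|\pi_Y M_{\mathbf j}\Lambda|$. Then $|\Lambda|=a(\mathbf j_X)b(\mathbf j_Y)$ for every pair, so $a$ and $b$ are constant.

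The main obstacle is the second step, upgrading existence of single points $\mathbf t+\mathbf w$ to full closure under a subgroup. Over $\F_p$ with $p$ prime, it suffices to produce one nonzero $\mathbf w$ in each needed direction and then iterate, since repeated addition generates $\F_p\mathbf w$. The delicate point is checking that re-applying the closure from the new base point $\mathbf t+\mathbf w$ reproduces the same direction. Here I would explicitly compute $\mathbf w$ via Cramer's rule, which expresses it through the minors with one replaced column, and invoke their nonvanishing.
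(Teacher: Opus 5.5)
\textbf{There is a genuine gap in your second step.} Your reduction is sound. Restricting the rectangle to each address shows that $M_{\mathbf j}\Lambda$ is Cartesian for every $\mathbf j$. Your ``cross-address'' closure is exactly this per-slice statement, since $(\mathbf j_X,\mathbf j_Y)$ is one address word. The weight bound is also used correctly.

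The problem is that the second step carries the entire content of the lemma, yet it is only announced, and as you set it up it cannot work. Keep $\mathbf j_X$ fixed, replace $j_Q$ ($Q\in Y$) by $k$, and put $M=M_{\mathbf j}$, $\boldsymbol\gamma=\mathbf g_{Q,k}^{\transpose}M^{-1}$. Solving the two defining conditions for the corrections gives
\[
\mathbf w'-\mathbf w=\gamma_Q^{-1}\bigl(\boldsymbol\gamma_X\cdot L_{\mathbf j_X}\mathbf d\bigr)\,M^{-1}\mathbf e_Q .
\]
So the new point is the old one shifted by $\delta=\gamma_Q^{-1}\boldsymbol\gamma_X\cdot L_{\mathbf j_X}\mathbf d$ in the $Q$-th slice coordinate. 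This shift is driven by the $X$-part of the displacement, not the $Y$-part.

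Your first step only makes $\mathbf d$ visible on $Y$. Nothing excludes $L_{\mathbf j_X}\mathbf d=0$, and in that case $\mathbf w=\mathbf d$ for every $\mathbf j_Y$, so you only recover $\mathbf s$. Structurally, if $L_{\mathbf j_X}\Lambda$ is a single point $\mathbf a$, every slice $M_{(\mathbf j_X,\mathbf j_Y)}\Lambda=\{\mathbf a\}\times K_{\mathbf j_Y}\Lambda$ is automatically Cartesian. Varying $Y$-addresses can therefore fill the $Y$ side only when the $X$ side already has two points. You must also choose $k$ with $\boldsymbol\gamma_X\cdot L_{\mathbf j_X}\mathbf d\neq0$. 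This is possible because the rows $\boldsymbol\gamma^{(Q,k)}_X$, $k\neq j_Q$, span $(\F_p^X)^*$, and $\gamma_Q\neq0$ by the same replacement-minor argument.

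\textbf{How to close it.} Your ``delicate point'' about iteration disappears once you argue with sets instead of a moving base point. Take $\mathbf t,\mathbf s\in\Lambda$ with $\mathbf a=L_{\mathbf j_X}\mathbf t\neq\mathbf a'=L_{\mathbf j_X}\mathbf s$. Then $\delta$ depends only on $\mathbf a'-\mathbf a$, not on any $Y$-coordinate. Since $M\Lambda=A\times B$, for each $\mathbf b\in B$ there is $\mathbf s_{\mathbf b}\in\Lambda$ with $M\mathbf s_{\mathbf b}=(\mathbf a',\mathbf b)$. Your closure at the neighbouring address, applied to $(\mathbf t,\mathbf s_{\mathbf b})$, gives $\mathbf b+\delta\mathbf e_Q\in B$. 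Hence $B+\delta\mathbf e_Q=B$, and the prime-field argument fills every $Q$-line of $B$. Running over all $Q\in Y$ gives $B=\F_p^Y$.

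This is the paper's Lemma~\ref{lem:supp-one-side-full} with $X$ and $Y$ interchanged. Your counting aid $|\Lambda|=|A|\,|B|\geq2$ then starts the paper's bootstrap. One side has two points, so the other side fills and has at least $p\geq2$ points, so the first side fills too. Alternatively, your weight argument lets you pick one $\mathbf j$ with every coordinate of $M_{\mathbf j}\mathbf d$ nonzero, so both sides have two points at the same address.

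\textbf{Your third step.} It is then unnecessary, since both sides full at one address already give $M_{\mathbf j}\Lambda=\F_p^N$. As written it is also misjustified. The claim $\sum_{\mathbf j_X}\ker L_{\mathbf j_X}=\F_p^N$ needs $\bigcap_{\mathbf j_X}\Span\{\mathbf g_{P,j_P}:P\in X\}=0$. That is true by MDS, but it is not the statement that all columns $\mathbf g_{P,j}$, $P\in X$, span $\F_p^N$. The step also needs invariance of all of $\Lambda$ under each kernel, not $\mathbf t+\ker L_{\mathbf j_X}\subseteq\Lambda$ for a single $\mathbf t$.
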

	
	The proof, which relies essentially on the nonvanishing minors of the MDS
	generator matrix, is given in Supplemental
	Material.
	
	Applying Lemma~\ref{lem:rectangle-rigidity} to
	Eqs.~\eqref{eq:support-union-proof} and
	\eqref{eq:support-product-proof}, together with
	$|\Lambda(\psi)|\geq2$, yields
	\begin{equation}
		\Lambda(\psi)=\F_p^N.
		\label{eq:all-tiles-active-proof}
	\end{equation}
	Thus every tile is active and every coefficient $a_{\mathbf t}$ is
	nonzero.
	
	For $q_X$ and $q_Y$ in the computational bases of $X$ and $Y$, respectively,
	write
	\begin{equation}
		\alpha(q_X)=\langle q_X|\alpha\rangle_X,
		\qquad
		\beta(q_Y)=\langle q_Y|\beta\rangle_Y.
	\end{equation}
	Because the tiles partition the computational grid, for every pair
	$(q_X,q_Y)$ there is a unique $\mathbf t\in\F_p^N$ such that
	\begin{equation}
		(q_X,q_Y)
		\in
		F_X(\mathbf t)\times F_Y(\mathbf t)
		=
		T_{\mathbf t}.
	\end{equation}
	On this tile, Eq.~\eqref{eq:complement-expansion-proof} gives
	\begin{equation}
		\alpha(q_X)\beta(q_Y)
		=
		(\langle q_X|\otimes\langle q_Y|)\ket{\psi}
		=
		a_{\mathbf t}.
		\label{eq:coefficient-on-tile-proof}
	\end{equation}
	Since $a_{\mathbf t}\neq0$, both
	$\alpha(q_X)$ and $\beta(q_Y)$ are nonzero.
	
	Now fix $\mathbf t$ and choose
	$q_Y\in F_Y(\mathbf t)$. For any
	$q_X,q_X'\in F_X(\mathbf t)$, both
	$(q_X,q_Y)$ and $(q_X',q_Y)$ belong to $T_{\mathbf t}$.
	Hence Eq.~\eqref{eq:coefficient-on-tile-proof} gives
	\begin{equation}
		\alpha(q_X)\beta(q_Y)
		=
		a_{\mathbf t}
		=
		\alpha(q_X')\beta(q_Y).
		\label{eq:proof-constant-on-tile}
	\end{equation}
	Because $\beta(q_Y)\neq0$, we obtain
	\begin{equation}
		\alpha(q_X)=\alpha(q_X').
	\end{equation}
	Thus $\alpha$ is constant on every  
	$F_X(\mathbf t)$. By the same argument,
	$\beta$ is constant on  every  
	$F_Y(\mathbf t).$

	As proved in the Lemma \ref{lem:incidence} in Supplemental material, the incidence graphs generated by
	the projected tiles $F_X(\mathbf t)$ and $F_Y(\mathbf t)$ are connected.
	The local constants therefore agree throughout the full computational
	bases of $X$ and $Y$. Consequently,
	\begin{equation}
		\ket{\alpha}_X
		\propto
		\sum_{q_X}\ket{q_X},
		\qquad
		\ket{\beta}_Y
		\propto
		\sum_{q_Y}\ket{q_Y}.
		\label{eq:uniform-factors-proof}
	\end{equation}
	It follows that
	\begin{equation}
		\ket{\psi}
		=
		\ket{\alpha}_X\otimes\ket{\beta}_Y
		\propto
		\ket S.
	\end{equation}
	This is impossible because
	$\ket{\psi}\in\cU_{\cC}^{\perp}$ whereas
	$\ket S\in\cU_{\cC}$.
	
	Hence no nonzero product vector across $X|Y$ is orthogonal to
	$\cU_{\cC}$. Since the bipartition $X|Y$ was arbitrary,
	$\cU_{\cC}$ is a GUPB.

	\vskip 10pt
	
	For an explicit family, choose $N^2$ distinct elements
	$\alpha_{P,j}\in\F_p$ and nonzero multipliers $v_{P,j}\in\F_p$, possible
	whenever $p\geq N^2$, and take
	\begin{equation}
		\mathbf g_{P,j}
		=
		v_{P,j}
		\begin{pmatrix}
			1&\alpha_{P,j}&\alpha_{P,j}^{2}&\cdots&\alpha_{P,j}^{N-1}
		\end{pmatrix}^{\transpose}.
		\label{eq:grs-realization}
	\end{equation}
	These columns generate a generalized Reed--Solomon
	$[N^2,N,N^2-N+1]_p$ code \cite{ReedSolomon1960}. Hence
	Theorem~\ref{thm:gupb-existence} gives an explicit GUPB for every $N\geq3$
	and every prime $p\geq N^2$.

	\medskip
	\noindent\emph{Genuine bound entanglement and its witness.---}
	We have showed that
	\begin{equation}
		\cG_{\cC}
		=
		\Span(\cU_{\cC})^\perp
		=
		\left\{
		\sum_{\mathbf t\in\F_p^N}
		a_{\mathbf t}\ket{\psi_{\mathbf t}}
		:
		\sum_{\mathbf t}a_{\mathbf t}=0
		\right\}.
		\label{eq:complement}
	\end{equation}
	Let $\Pi_{\cG}$ be the projector onto $\cG_{\cC}$ and define
	\begin{equation}
		\begin{aligned}
			P_{\cU}&=I-\Pi_{\cG},
			&
			R&=p^N-1,\\
			K&=(Np)^N-p^N+1,
			&
			D&=(Np)^N.
		\end{aligned}
		\label{eq:dimensions}
	\end{equation}
	The normalized complementary state is
	\begin{equation}
		\rho_{\cG}
		=
		\frac{\Pi_{\cG}}{R}.
		\label{eq:complement-state}
	\end{equation}
	The partial-transpose symmetry is most transparent from the normalized deleted
	modes.  Let
	\begin{equation}
		\ket{e_{\mathbf t}}=N^{-N/2}\ket{\psi_{\mathbf t}},
		\qquad
		\ket s=D^{-1/2}\ket S
		=p^{-N/2}\sum_{\mathbf t\in\F_p^N}\ket{e_{\mathbf t}}.
		\label{eq:normalized-hole-stopper-main}
	\end{equation}
	As shown in Supplemental Sec.~\ref{sec:supp-witness},
	\begin{equation}
		\Pi_{\cG}
		=
		\sum_{\mathbf t\in\F_p^N}\proj{e_{\mathbf t}}-\proj{s}.
		\label{eq:complement-projector-main}
	\end{equation}
	Every \(\ket{e_{\mathbf t}}\) and \(\ket s\) is a fully product vector with
	real coefficients in the computational basis.  Hence each rank-one projector
	in Eq.~\eqref{eq:complement-projector-main} is fixed by partial transposition
	on an arbitrary subset of parties.  Therefore, for every
	$X\subseteq\Z_N$,
	\begin{equation}
		\Pi_{\cG}^{T_X}=\Pi_{\cG},
		\qquad P_{\cU}^{T_X}=P_{\cU},
		\qquad
		\rho_{\cG}^{T_X}=\rho_{\cG}.
		\label{eq:pt-invariance}
	\end{equation}
	Theorem~\ref{thm:gupb-existence} makes $\rho_{\cG}$ a   GME, whereas
	Eq.~\eqref{eq:pt-invariance} makes it PPT, and therefore nondistillable,
	across every bipartition \cite{Horodecki1998}.
	
	Let $\BSEP$ denote the convex set of normalized biseparable states and
	define
	\begin{equation}
		\epsilon_G
		=
		\min_{\sigma\in\BSEP}
		\Tr(P_{\cU}\sigma)>0.
		\label{eq:biseparable-threshold}
	\end{equation}
	Then
	\begin{equation}
		W_G=P_{\cU}-\epsilon_G I
		\label{eq:gme-witness}
	\end{equation}
	is a GME witness and
	$\Tr(W_G\rho_{\cG})=-\epsilon_G$.
	The witness also separates the GUPB state from the fully decomposable class.
	
	\begin{theorem}[GME states with PPT across every bipartition]
		\label{thm:ppt-gme-witness}
		For every MDS-code GUPB of Theorem~\ref{thm:gupb-existence}, $W_G$ is
		nondecomposable with respect to every bipartition. Moreover, for
		\begin{equation}
			\rho_\lambda
			=
			\frac{\Pi_{\cG}+\lambda P_{\cU}}
			{R+\lambda K},
			\qquad 0<\lambda\leq1,
			\label{eq:strict-ppt-family}
		\end{equation}
		one has $\rho_\lambda^{T_X}=\rho_\lambda>0$ for every nontrivial
		bipartition $X|Y$. The state is GME whenever
		\begin{equation}
			0<\lambda<
			\kappa_G
			:=
			\frac{R\epsilon_G}{K(1-\epsilon_G)}.
			\label{eq:ppt-gme-threshold}
		\end{equation}
		No fully decomposable GME witness detects any member of this family, whereas
		$W_G$ detects the interval in Eq.~\eqref{eq:ppt-gme-threshold}.
	\end{theorem}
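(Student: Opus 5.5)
The statement has four parts: (i) $\rho_\lambda$ is PPT and of full rank across every bipartition, (ii) $\rho_\lambda$ is GME below the threshold $\kappa_G$, (iii) no fully decomposable witness detects $\rho_\lambda$, and (iv) $W_G$ is nondecomposable across every bipartition. The plan is to prove them in that order, because (iii) and (iv) follow formally from (i) and (ii).

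For (i), I will use Eq.~\eqref{eq:pt-invariance}: $\Pi_{\cG}^{T_X}=\Pi_{\cG}$ and $P_{\cU}^{T_X}=P_{\cU}$ for every $X$. Since $\rho_\lambda$ is a linear combination of these two operators, $\rho_\lambda^{T_X}=\rho_\lambda$. Because $\Pi_{\cG}+P_{\cU}=I$, the state $\rho_\lambda$ has eigenvalues $1/(R+\lambda K)$ on $\cG_{\cC}$ and $\lambda/(R+\lambda K)$ on its complement. For $\lambda>0$ both are positive, so $\rho_\lambda>0$.

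For (ii), I will compute $\Tr(P_{\cU}\rho_\lambda)=\lambda K/(R+\lambda K)$ and hence $\Tr(W_G\rho_\lambda)=\lambda K/(R+\lambda K)-\epsilon_G$. This is negative exactly when $\lambda K(1-\epsilon_G)<R\epsilon_G$, which is the interval in Eq.~\eqref{eq:ppt-gme-threshold}. It remains to check that $W_G$ is a GME witness. By definition of $\epsilon_G$ in Eq.~\eqref{eq:biseparable-threshold}, $\Tr(W_G\sigma)\ge0$ on $\BSEP$. The strict positivity $\epsilon_G>0$ comes from Theorem~\ref{thm:gupb-existence}: every pure biseparable state is product across some $X|Y$, so it is not contained in $\cG_{\cC}$ and has positive overlap with $P_{\cU}$. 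Since $\BSEP$ is compact and $\Tr(P_{\cU}\sigma)$ is continuous in $\sigma$ and attains its minimum on pure states, the minimum is positive. So $\rho_\lambda$ is detected, and hence GME, throughout that interval.

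For (iii), I will use the fully decomposable witness framework of \cite{Jungnitsch2011}. A fully decomposable witness has, for each bipartition $X|Y$, a representation $W=P_X+Q_X^{T_X}$ with $P_X,Q_X\ge0$. I will then apply the identity $\Tr(Q^{T_X}\rho)=\Tr(Q\rho^{T_X})$. Because $\rho_\lambda^{T_X}=\rho_\lambda\ge0$, this gives $\Tr(W\rho_\lambda)=\Tr(P_X\rho_\lambda)+\Tr(Q_X\rho_\lambda)\ge0$, so no such witness can detect $\rho_\lambda$.

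For (iv), suppose $W_G=P+Q^{T_X}$ for some bipartition $X|Y$ with $P,Q\ge0$. The same computation as in (iii) would give $\Tr(W_G\rho_\lambda)\ge0$, contradicting $\Tr(W_G\rho_\lambda)<0$ for $\lambda$ in the GME interval. The same contradiction arises from $\rho_{\cG}$ itself, since $\rho_{\cG}^{T_X}=\rho_{\cG}\ge0$ and $\Tr(W_G\rho_{\cG})=-\epsilon_G$. Hence $W_G$ is nondecomposable with respect to each $X|Y$.

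The only delicate step is the positivity $\epsilon_G>0$, which needs Theorem~\ref{thm:gupb-existence}, the reduction from mixed to pure biseparable states, and compactness. The rest is routine bookkeeping with the partial-transpose identity.
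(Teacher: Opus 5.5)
Your proposal is correct and follows essentially the same route as the paper: partial-transpose invariance of $\Pi_{\cG}$ and $P_{\cU}$, the eigenvalue computation showing $\rho_\lambda>0$, $\epsilon_G>0$ via Theorem~\ref{thm:gupb-existence} plus compactness of pure biseparable states, and the identity $\Tr(B^{T_X}\rho)=\Tr(B\rho^{T_X})$, which both excludes detection by fully decomposable witnesses and, applied to $\rho_{\cG}$ (the paper's choice, which you also give as an alternative), shows that $W_G$ is nondecomposable. The one step you leave implicit, as the paper also does, is $\epsilon_G<1$, which you need in order to divide by $1-\epsilon_G$; it holds because the fully product vector $\ket{e_{\mathbf t}}$ gives $\bra{e_{\mathbf t}}P_{\cU}\ket{e_{\mathbf t}}=p^{-N}<1$.
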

	
	Indeed,
	\begin{equation}
		\Tr(W_G\rho_\lambda)
		=
		\frac{\lambda K}{R+\lambda K}-\epsilon_G,
		\label{eq:strict-ppt-witness-value}
	\end{equation}
	which is negative exactly in the stated interval. If $W_G$ were decomposable
	with respect to one bipartition, $W_G=A_X+B_X^{T_X}$ with $A_X,B_X\geq0$, then its expectation
	on the PPT state $\rho_{\cG}$ would be nonnegative, contrary to
	$\Tr(W_G\rho_{\cG})=-\epsilon_G$. Thus the same witness is nondecomposable with respect to every bipartition. Since every $\rho_\lambda$ is PPT across every bipartition, it lies in
	the PPT-mixture set and all fully decomposable witnesses are nonnegative on
	it (the detail proof is presented at Supplemental Material).
	
	\medskip
	\noindent\emph{Measurement and discrimination consequences.---}
	The same complementary projector has a direct measurement interpretation;
	the proofs of the statements below are collected in Supplemental
	Sec.~\ref{sec:supp-measurement}.
	Equation~\eqref{eq:pt-invariance} makes
	\begin{equation}
		\mathsf M_G=\{P_{\cU},\Pi_{\cG}\}
		\label{eq:binary-gupb-measurement}
	\end{equation}
	PPT across every bipartition. Nevertheless
	$\Pi_{\cG}$ is not a biseparable positive operator. Otherwise, a nonzero separable summand in a biseparable decomposition would
	have a product vector, with respect to its defining bipartition, in its range, which would also lie in $\cG_{\cC}$, contradicting
	Theorem~\ref{thm:gupb-existence}. Thus
	\begin{equation}
		\Pi_{\cG}
		\in
		\left(
		\bigcap_{X|Y}\PPT_{X|Y}^{+}
		\right)
		\setminus\BSEP^{+}.
		\label{eq:ppt-bsep-separation}
	\end{equation}
	This observation gives a simple discrimination statement. Put
	\begin{equation}
		\rho_{\cU}=\frac{P_{\cU}}{K},
		\qquad
		\rho_{\cG}=\frac{\Pi_{\cG}}{R}.
		\label{eq:support-complement-states}
	\end{equation}
	For every element $E\in\BSEP^+$, let
	$q_{\cU}=\Tr(E\rho_{\cU})$ and $q_{\cG}=\Tr(E\rho_{\cG})$. Then
	\begin{equation}
		q_{\cU}\geq\kappa_G q_{\cG}.
		\label{eq:one-copy-biseparable-bound}
	\end{equation}
	A derivation is given in the Supplemental Material. The constant is sharp. In particular, no effect $E\in\BSEP^+$ can satisfy
	$\Tr(E\rho_{\cU})=0$ and $\Tr(E\rho_{\cG})>0$. By contrast, the projective
	measurement $\{P_{\cU},\Pi_{\cG}\}$ distinguishes the two states perfectly,
	and both effects are PPT across every bipartition.

	\medskip
	\noindent\emph{Finite copy  local discrimination.---} The following show a gap between finte copy biseparate measurement and PPT measurement.
	
	\begin{theorem}[Finite-copy  local discrimination]
		\label{thm:tensor-separation}
		For every $\ell\geq1$, the $\ell$-fold tensor power
		$\cU_{\cC}^{\otimes\ell}$, with the copies grouped according to the original
		parties, is a GUPB. Fix a bipartition $X|Y$ and normalized vectors $\ket A\in\cH_X^{\otimes\ell}$
		and $\ket B\in\cH_Y^{\otimes\ell}$. Define
		\begin{align}
			p_{\cU}^{(\ell)}(A,B)
			&=
			\bra{A,B}P_{\cU}^{\otimes\ell}\ket{A,B},
			\label{eq:tensor-p}\\
			p_{\cG}^{(\ell)}(A,B)
			&=
			\bra{A,B}\Pi_{\cG}^{\otimes\ell}\ket{A,B},
			\label{eq:tensor-g}\\
			\kappa_{\ell,X|Y}
			&=
			\left(\frac{R}{K}\right)^\ell
			\min_{\substack{\norm A=\norm B=1\\p_{\cG}^{(\ell)}(A,B)>0}}
			\frac{p_{\cU}^{(\ell)}(A,B)}{p_{\cG}^{(\ell)}(A,B)}.
			\label{eq:tensor-slope-definition}
		\end{align}
		Then
		\begin{equation}
			0<\kappa_{\ell,X|Y}\leq1,
			\label{eq:kappa-range}
		\end{equation}
		and every $X|Y$-separable effect $E$ satisfies
		\begin{equation}
			\Tr(E\rho_{\cU}^{\otimes\ell})
			\geq
			\kappa_{\ell,X|Y}
			\Tr(E\rho_{\cG}^{\otimes\ell}).
			\label{eq:tensor-slope}
		\end{equation}
		The constant is sharp. By contrast, the measurement
		\begin{equation}
			\left\{
			P_{\cU}^{\otimes\ell},
			I-P_{\cU}^{\otimes\ell}
			\right\}
			\label{eq:tensor-ppt-measurement}
		\end{equation}
		is invariant under partial transposition across every bipartition and
		perfectly distinguishes $\rho_{\cU}^{\otimes\ell}$ from
		$\rho_{\cG}^{\otimes\ell}$.
	\end{theorem}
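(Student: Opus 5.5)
The plan has three parts. First, genuine unextendibility of $\cU_{\cC}^{\otimes\ell}$ follows by induction on $\ell$ from a general two-factor lemma. Second, the slope inequality follows by decomposing separable effects into product projectors. Third, the PPT measurement statement follows directly from Eq.~\eqref{eq:pt-invariance}.

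\emph{Unextendibility of tensor powers.} I prove a general lemma: if $\cU_1$ and $\cU_2$ are GUPBs on two copies of the same $N$ parties, then $\cU_1\otimes\cU_2$ is a GUPB, with copies grouped by party. Fix $X|Y$ and suppose $\ket a_{X_1X_2}\otimes\ket b_{Y_1Y_2}$ is orthogonal to every $u\otimes v$ with $u\in\cU_1$ and $v\in\cU_2$. Every $v\in\cU_2$ is fully product, so $v=v_X\otimes v_Y$.
\begin{itemize}
\item Contract copy $2$ with $\bra v$. This gives $(\bra{v_X}_2\ket a)\otimes(\bra{v_Y}_2\ket b)$, which is a product vector on copy $1$ across $X|Y$ and is orthogonal to all of $\cU_1$. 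By genuine unextendibility of $\cU_1$ it vanishes.
\item Hence for each $v$, either $\bra{v_X}_2\ket a=0$ or $\bra{v_Y}_2\ket b=0$. Via Schmidt decompositions, $\bra{v_X}_2\ket a=0$ iff $v_X\perp S_X$, where $S_X\subseteq\cH_{X}$ is the copy-$2$ reduced range of $a$; similarly define $S_Y$.
\item If $a,b\neq0$, pick nonzero $x\in S_X$ and $y\in S_Y$. Then $\braket{v_X}{x}\braket{v_Y}{y}=0$ for every $v\in\cU_2$.
\item So $x\otimes y$ is a nonzero $X|Y$ product vector orthogonal to $\cU_2$, contradicting genuine unextendibility of $\cU_2$.
\end{itemize}
Taking $\cU_1=\cU_{\cC}^{\otimes(\ell-1)}$ and $\cU_2=\cU_{\cC}$, and using Theorem~\ref{thm:gupb-existence} as the base case, gives the first claim by induction.

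\emph{The slope constant.} Since $\Span(\cU_{\cC}^{\otimes\ell})=\Span(\cU_{\cC})^{\otimes\ell}$, the operator $P_{\cU}^{\otimes\ell}$ is exactly the projector onto the span of the tensor-power family.
\begin{itemize}
\item \emph{Positivity.} By the previous step, $p_{\cU}^{(\ell)}(A,B)>0$ for every normalized product pair. By continuity on the compact set of such pairs, $p_{\cU}^{(\ell)}\geq c>0$, while $p_{\cG}^{(\ell)}\leq1$. The ratio therefore stays $\geq c$.
\item \emph{Attainment.} The ratio blows up as $p_{\cG}^{(\ell)}\to0$. The infimum is therefore achieved on a compact set $\{p_{\cG}^{(\ell)}\geq\delta\}$, so the minimum exists and $\kappa_{\ell,X|Y}>0$.
\item \emph{Upper bound.} Write $I$ as a Haar average of product projectors $\proj{A}\otimes\proj{B}$. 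Averaging numerator and denominator gives $\Tr P_{\cU}^{\otimes\ell}/\Tr\Pi_{\cG}^{\otimes\ell}=(K/R)^\ell$. A minimum is at most a ratio of averages, so $\kappa_{\ell,X|Y}\leq1$.
\item \emph{Slope inequality.} Any $X|Y$-separable effect is $E=\sum_i w_i\proj{A_i,B_i}$ with $w_i\geq0$. Terms with $p_{\cG}^{(\ell)}=0$ only help. For the others, the definition of the minimum gives $p_{\cU}^{(\ell)}/K^\ell\geq\kappa_{\ell,X|Y}\,p_{\cG}^{(\ell)}/R^\ell$ termwise. Summing with $\rho_{\cU}^{\otimes\ell}=P_{\cU}^{\otimes\ell}/K^\ell$ and $\rho_{\cG}^{\otimes\ell}=\Pi_{\cG}^{\otimes\ell}/R^\ell$ yields Eq.~\eqref{eq:tensor-slope}.
\item \emph{Sharpness.} Take $E=\proj{A_*,B_*}$ at a minimizer.
\end{itemize}

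\emph{The PPT measurement.} $P_{\cU}^{\otimes\ell}$ is invariant under every partial transposition because $P_{\cU}$ is, by Eq.~\eqref{eq:pt-invariance}, and partial transposition factorizes over copies; the same then holds for $I-P_{\cU}^{\otimes\ell}$. For perfect discrimination, $\Tr(P_{\cU}^{\otimes\ell}\rho_{\cU}^{\otimes\ell})=1$. Moreover, $P_{\cU}\Pi_{\cG}=0$ gives $P_{\cU}^{\otimes\ell}\Pi_{\cG}^{\otimes\ell}=0$, so $\Tr(P_{\cU}^{\otimes\ell}\rho_{\cG}^{\otimes\ell})=0$.

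I expect the main obstacle to be the tensor-power unextendibility step, specifically turning the pointwise dichotomy "$v_X\perp S_X$ or $v_Y\perp S_Y$" into a single product vector $x\otimes y$ orthogonal to $\cU_2$. The reduced-range formulation above is what I would use; I would check carefully that the complex-conjugation conventions in the contraction do not break it.
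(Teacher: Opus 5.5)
Your proposal is correct and follows the paper's proof essentially step for step. The paper also proves tensor closure by contracting against one factor's product vectors and then passing to reduced-range (Schmidt) supports; it contracts against the first factor and contradicts the first UPB, whereas you do the mirror image. It then obtains positivity, attainment, the termwise slope inequality, sharpness and the PPT measurement exactly as you do. Your bound $\kappa_{\ell,X|Y}\leq1$, obtained by averaging the pointwise inequality $p_{\cU}^{(\ell)}\geq m\,p_{\cG}^{(\ell)}$, is a slightly more direct version of the paper's contradiction argument with the same Haar averages.

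You should still record three routine checks. The GUPB definition requires that $\cU_{\cC}^{\otimes\ell}$ be orthogonal and incomplete, which holds since $K^\ell<D^\ell$. The admissible set $\{p_{\cG}^{(\ell)}>0\}$ must be nonempty, and your Haar average $R^\ell/D^\ell>0$ already gives this.
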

	
	The tensor closure is the usual bipartite UPB tensor-product closure applied
	to every bipartition \cite{DiVincenzo2003}. A complete proof of the closure,
	the finite-copy bound, and its optimality is given in Supplemental
	Material. That section also gives
	\begin{equation}
		\kappa_{1,X|Y}
		=
		\frac{R\epsilon_{X|Y}}
		{K(1-\epsilon_{X|Y})},
		\qquad
		\epsilon_{X|Y}
		=
		\min_{\norm a=\norm b=1}
		\bra{a,b}P_{\cU}\ket{a,b},
		\label{eq:one-copy-slope}
	\end{equation}
	and
	\begin{equation}
		\kappa_{\ell+r,X|Y}
		\leq
		\kappa_{\ell,X|Y}
		\kappa_{r,X|Y}.
		\label{eq:slope-submultiplicative}
	\end{equation}
	Consequently the limit
	\begin{equation}
		\xi_{X|Y}
		=
		-\lim_{\ell\to\infty}
		\frac{1}{\ell}\log\kappa_{\ell,X|Y}
		\label{eq:slope-exponent}
	\end{equation}
	exists, possibly as $+\infty$. The statement holds for every finite copy number: arbitrary joint operations
	on all copies are allowed within each side of the bipartition, but  still a
	separable-measurement. 
	
	\medskip
	\noindent\emph{Discussion.---}
	The result settles the existence problem for orthogonal GUPBs without
	relying on a small exceptional example. The construction applies to every
	linear $[N^2,N,N^2-N+1]_p$ MDS code over a prime field. The proof is also transparent: transversal minors assign each computational cell to one
	tile, the remaining MDS minors force rectangle rigidity across every bipartition,
	Fourier deletion leaves one deleted mode per tile, and the stopper removes the two
	rectangular cases that survive the rigidity lemma.
	
	The complementary projector also gives the main physical consequences of the construction. Its range is a GES, while the projector is invariant under partial transposition across every bipartition. The associated witness is therefore
	nondecomposable with respect to every bipartition and detects full-rank states lying in
	the interior of every PPT cone, a regime inaccessible to fully decomposable
	GME witnesses. The same projector pair gives a measurement effect that is PPT across every
	bipartition but lies outside the biseparable cone. It also yields, at every
	finite copy number, a separation between a measurement invariant under
	partial transposition and measurements separable across a fixed bipartition.
	
	Bertrand's postulate allows $N^2\leq p<2N^2$, so the local dimension can be
	chosen below $2N^3$. The present family is not expected to be dimension
	optimal. Smaller GUPBs, tighter values of $\epsilon_G$, the prime-power
	extension without an additional additive-mixing hypothesis, and the asymptotic
	exponent in Eq.~\eqref{eq:slope-exponent} remain open. 
	
	\medskip
	\noindent\emph{Acknowledgments.---}
	This work was supported by the Guangdong Basic and Applied Basic Research
	Foundation under Grant No.~2024A1515010380 and by the National Natural
	Science Foundation of China under Grant No.~12371458.
	
	\bibliographystyle{apsrev4-1}
	\bibliography{GUPB_MDS_references}
	
	\clearpage
	
	%%%%%%%%%%%%%%%%%%%%%%%%%%%%
	\onecolumngrid
	\appendix
	
	\setcounter{equation}{0}
	\renewcommand{\theequation}{S\arabic{equation}}
	\renewcommand{\theHequation}{S\arabic{equation}}
	\setcounter{theorem}{0}
	\renewcommand{\thetheorem}{S\arabic{theorem}}
	\renewcommand{\theHtheorem}{S\arabic{theorem}}
	\setcounter{section}{0}
	\renewcommand{\thesection}{S\Roman{section}}
	\renewcommand{\theHsection}{S\Roman{section}}
	\renewcommand{\thesubsection}{\thesection.\Alph{subsection}}
	\renewcommand{\theHsubsection}{\theHsection.\Alph{subsection}}
	
	\begin{center}
		{\large\bfseries Supplemental Material for\\[0.25em]
			``Genuinely Unextendible Product Bases from Maximum Distance Separable Codes''\par}
		\vspace{0.7em}
		{\large Mao-Sheng Li\par}
	\end{center}
	\vspace{0.5em}
	
	\noindent\textbf{Common notation.---} The notation is identical to that of the Letter.  In particular,
	$\mathsf G$ is the generator matrix, $\mathbf g_{P,j}$ is its $(P,j)$th
	column, $\mathbf t\in\F_p^N$ is a message and tile label,
	$\mathbf c(\mathbf t)=\mathbf t^{\transpose}\mathsf G$ is the associated
	codeword, and $c_{P,j}(\mathbf t)$ is its $(P,j)$th coordinate.  Quantum
	subspaces are denoted by calligraphic letters, projectors by $P$ or $\Pi$,
	and coefficients in superpositions are never denoted by $c$, so that they
	cannot be confused with codeword coordinates.
	
	Let $N\geq3$ be an integer, let $p$ be a prime, let
	$\Z_N=\{0,1,\ldots,N-1\}$ denote the integers modulo $N$, and let $\F_p$
	denote the field with $p$ elements. Unless stated otherwise, vectors in
	$\F_p^N$ are columns. Throughout, $\cC\subseteq\F_p^{N^2}$ is an arbitrary
	linear $[N^2,N,N^2-N+1]_p$ maximum-distance-separable (MDS) code, and
	\begin{equation}
		\mathsf G
		=
		\bigl[\mathbf g_{0,0}\ \mathbf g_{0,1}\ \cdots\
		\mathbf g_{N-1,N-1}\bigr]
		\in\F_p^{N\times N^2}
		\label{eq:supp-generator-matrix}
	\end{equation}
	is a fixed full rank generator matrix. Its columns are labeled by
	$(P,j)\in\Z_N\times\Z_N$. For $\mathbf t\in\F_p^N$ we use the column-vector convention for messages and write the corresponding codeword as the row
	vector
	\begin{equation}
		\mathbf c(\mathbf t)
		=
		\mathbf t^{\transpose}\mathsf G,
		\qquad
		c_{P,j}(\mathbf t)
		=
		\mathbf t^{\transpose}\mathbf g_{P,j}.
		\label{eq:supp-codeword-coordinate}
	\end{equation}

	The party labeled $P\in\Z_N$ has Hilbert space
	$\cH_P=\Span\{\ket{j,r}_P:j\in\Z_N, r\in\F_p\}\cong\C^{Np}$, with the
	displayed vectors as an orthonormal computational basis. The total Hilbert
	space is $\cH=\bigotimes_{P\in\Z_N}\cH_P$. A \emph{computational cell} is
	an $N$-tuple $\mathbf q=((j_P,r_P))_{P\in\Z_N}$ labeling the basis vector
	$\ket{\mathbf q}=\bigotimes_P\ket{j_P,r_P}_P$. For a vector $w\in\cH$,
	$\suppcomp(w)$ is the set of cells on which $w$ has nonzero computational
	coefficient.
	
	For a message $\mathbf t\in\F_p^N$, define the local fiber
	$F_P(\mathbf t)=\{(j,c_{P,j}(\mathbf t)):j\in\Z_N\}$ and the
	tile $T_{\mathbf t}=\prod_{P\in\Z_N}F_P(\mathbf t)$. Let
	$\cH_{\mathbf t}=\Span\{\ket{\mathbf q}:\mathbf q\in T_{\mathbf t}\}$
	and let $\Pi_{\mathbf t}$ be the orthogonal projector onto
	$\cH_{\mathbf t}$. With $\zeta=e^{2\pi i/N}$ and $a\in\Z_N$, set
	$\ket{f_{P,\mathbf t}(a)}=\sum_{j=0}^{N-1}\zeta^{aj}
	\ket{j,c_{P,j}(\mathbf t)}_P$. The product Fourier basis of a
	tile is $\cB_{\mathbf t}=\{\bigotimes_P
	\ket{f_{P,\mathbf t}(a_P)}:a_P\in\Z_N\}$, and its deleted vector is
	$\ket{\psi_{\mathbf t}}=\bigotimes_P\ket{f_{P,\mathbf t}(0)}$. The global
	stopper is $\ket S=\bigotimes_P\sum_{j\in\Z_N,r\in\F_p}\ket{j,r}_P$, and
	the constructed orthogonal family is
	$\cU_{\cC}=\bigcup_{\mathbf t}(\cB_{\mathbf t}\setminus
	\{\ket{\psi_{\mathbf t}}\})\cup\{\ket S\}$.
	
	A nontrivial bipartition $X|Y$ is a disjoint decomposition
	$\Z_N=X\mathbin{\dot\cup}Y$ with $X,Y\neq\varnothing$. For
	$Z\in\{X,Y\}$, write $\cH_Z=\bigotimes_{P\in Z}\cH_P$ and
	$F_Z(\mathbf t)=\prod_{P\in Z}F_P(\mathbf t)$. An $X|Y$ product vector,
	also called a product vector across a bipartition, is
	$\ket{\alpha}_X\otimes\ket{\beta}_Y\in\cH_X\otimes\cH_Y$. A genuinely
	unextendible product basis (GUPB) is an incomplete orthogonal family of fully
	product vectors whose orthogonal complement contains no nonzero product vector
	with respect to any nontrivial bipartition.

	\section{Construction checks}
	\label{sec:supp-construction}

	\begin{proposition}[Tile partition and orthogonality]
		\label{prop:supp-construction-checks}
		The sets $\{T_{\mathbf t}:\mathbf t\in\F_p^N\}$ form a partition of
		$(\Z_N\times\F_p)^N$.  For each $\mathbf t$, $\cB_{\mathbf t}$ is an
		orthogonal product basis of the subspace supported on $T_{\mathbf t}$.
		Moreover, the family $\cU_{\cC}$ in Eq.~\eqref{eq:gupb-set} is orthogonal
		and has the cardinality and complementary dimension in
		Eq.~\eqref{eq:size-count}.
	\end{proposition}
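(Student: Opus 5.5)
The proof splits into four routine checks: the tile partition, orthogonality within a tile, orthogonality of the full family, and the counting.

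\emph{Partition.} For a cell $\mathbf q=((j_P,r_P))_P$, membership $\mathbf q\in T_{\mathbf t}$ is exactly the linear system \eqref{eq:tile-linear-system}. This holds because $F_P(\mathbf t)$ contains exactly one element with first coordinate $j_P$, namely $(j_P,c_{P,j_P}(\mathbf t))$. The coefficient matrix of the system has rows $\mathbf g_{P,j_P}^{\transpose}$ for $P\in\Z_N$. These are $N$ columns of $\mathsf G$ with pairwise distinct labels $(P,j_P)$, so the MDS property makes them linearly independent. The $N\times N$ matrix is therefore invertible, and the system has exactly one solution $\mathbf t$. Existence of the solution shows that the tiles cover the grid, and uniqueness shows that they are disjoint. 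As a cross-check, each tile has $N^N$ cells and $p^N\cdot N^N=(Np)^N$, which is the size of the grid.

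\emph{Orthogonality within a tile.} For fixed $P$ and $\mathbf t$, the vectors $\ket{f_{P,\mathbf t}(a)}$ are supported on the $N$ distinct basis vectors $\ket{j,c_{P,j}(\mathbf t)}$ with $j\in\Z_N$. Their inner products are
\[
\sum_j\zeta^{(b-a)j}=N\delta_{ab},
\]
so they form an orthogonal basis of the $N$-dimensional span of $F_P(\mathbf t)$. Taking tensor products, $\cB_{\mathbf t}$ is an orthogonal product basis of $\cH_{\mathbf t}$, which is the span of $\prod_P F_P(\mathbf t)=T_{\mathbf t}$. It has $N^N$ elements.

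\emph{Orthogonality of $\cU_{\cC}$.} Vectors taken from different tiles have disjoint computational supports by the partition, so they are orthogonal. For the stopper, $\ket S$ has all coefficients equal to $1$. Expanding over the partition of the grid gives $\ket S=\sum_{\mathbf t}\sum_{\mathbf q\in T_{\mathbf t}}\ket{\mathbf q}=\sum_{\mathbf t}\ket{\psi_{\mathbf t}}$, because $\ket{\psi_{\mathbf t}}$ has coefficient $1$ on every cell of $T_{\mathbf t}$. A retained vector of $\cB_{\mathbf t}$ has some label $a_P\neq0$. It is then orthogonal to $\ket{\psi_{\mathbf t}}$ by the within-tile orthogonality, and it is orthogonal to every $\ket{\psi_{\mathbf t'}}$ with $\mathbf t'\neq\mathbf t$ by disjoint support. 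Hence it is orthogonal to $\ket S$.

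\emph{Counting.} The family consists of $p^N(N^N-1)$ retained tile vectors together with the single vector $\ket S$, which gives $\abs{\cU_{\cC}}=p^N(N^N-1)+1$. All of these vectors are mutually orthogonal and nonzero, hence linearly independent. The dimension of their span's complement is therefore $(Np)^N-p^N(N^N-1)-1=p^N-1$.

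None of these steps is a real obstacle. The only place where care is needed is the partition step: one must note that the $N$ selected columns carry distinct labels, so the MDS condition genuinely applies, and that $F_P(\mathbf t)$ has exactly one element in each $j$-slot, so that membership in a tile is equivalent to the linear system.
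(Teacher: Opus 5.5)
Your proof is correct and follows the paper's argument essentially step for step: the transversal linear system with MDS-independent rows gives the partition, local Fourier orthogonality gives each tile basis, and the dimension count gives the cardinality and complement. The only small difference is in the stopper step: you expand $\ket S=\sum_{\mathbf t}\ket{\psi_{\mathbf t}}$ and use within-tile orthogonality plus disjoint supports, whereas the paper factorizes the overlap into local overlaps $\braket{s_P}{f_{P,\mathbf t}(a)}=N\delta_{a,0}$; both are immediate.
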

	
	\begin{proof}
		Fix a computational cell
		$\mathbf q=((j_P,r_P))_{P\in\Z_N}$.  The condition
		$\mathbf q\in T_{\mathbf t}$ is equivalent to
		\begin{equation}
			\mathbf t^{\transpose}\mathbf g_{P,j_P}=r_P,
			\qquad P\in\Z_N.
			\label{eq:supp-tile-system}
		\end{equation}
		The coefficient matrix of this system has rows
		$\mathbf g_{P,j_P}^{\transpose}$.  The $N$ selected generator columns are
		distinct and hence independent; the
		system therefore has a unique solution $\mathbf t$.  Thus every
		computational cell belongs to exactly one tile.
		
		For fixed $P$ and $\mathbf t$,
		\begin{equation}
			\braket{f_{P,\mathbf t}(a)}{f_{P,\mathbf t}(b)}
			=
			\sum_{j=0}^{N-1}\zeta^{(b-a)j}
			=
			N\,\delta_{a,b}.
			\label{eq:supp-local-fourier-orthogonality}
		\end{equation}
		Hence the $N^N$ product vectors in $\cB_{\mathbf t}$ are mutually
		orthogonal.  They all lie in the $N^N$-dimensional tile subspace, so they
		form a basis.  Distinct tile bases are mutually orthogonal because their
		computational supports are disjoint.
		
		Finally, let
		$\ket{s_P}=\sum_{j\in\Z_N,r\in\F_p}\ket{j,r}_P$.  Then
		\begin{equation}
			\braket{s_P}{f_{P,\mathbf t}(a)}
			=
			\sum_{j=0}^{N-1}\zeta^{aj}
			=
			N\,\delta_{a,0}.
			\label{eq:supp-stopper-local-overlap}
		\end{equation}
		Every retained tile vector has at least one nonzero Fourier label and is
		therefore orthogonal to the global stopper.  Consequently $\cU_{\cC}$ is
		orthogonal.  It contains $p^N(N^N-1)$ retained tile vectors and one stopper,
		so
		\begin{equation}
			\abs{\cU_{\cC}}=p^N(N^N-1)+1.
		\end{equation}
		The total Hilbert-space dimension is $(Np)^N=p^NN^N$, which gives
		$\dim\Span(\cU_{\cC})^\perp=p^N-1$.
	\end{proof}

	\section{MDS rectangle rigidity and projected-tile connectivity}
	\label{sec:supp-rectangle-rigidity}

	Fix a nontrivial bipartition $X|Y$ throughout. For an address word
	$\mathbf j=(j_P)_{P\in\Z_N}\in\Z_N^N$, define the fixed-address slice
	\begin{equation}
		\Sigma_{\mathbf j}
		=
		\left\{
		((j_P,r_P))_{P\in\Z_N}:(r_P)_P\in\F_p^N
		\right\}
		\label{eq:supp-address-slice}
	\end{equation}
	and the transversal matrix
	\begin{equation}
		M_{\mathbf j}
		=
		\begin{pmatrix}
			\mathbf g_{0,j_0}^{\transpose}\\
			\vdots\\
			\mathbf g_{N-1,j_{N-1}}^{\transpose}
		\end{pmatrix}.
		\label{eq:supp-transversal-matrix}
	\end{equation}
	The selected columns of $\mathsf G$ are distinct, so 
	\begin{equation}
		\det M_{\mathbf j}\neq0.
		\label{eq:supp-transversal-invertible}
	\end{equation}
	
	\begin{lemma}[Slice image]
		\label{lem:supp-slice-image}
		Under the identification $\Sigma_{\mathbf j}\cong\F_p^N$ by field-symbol
		coordinates,
		\begin{equation}
			T_{\mathbf t}\cap\Sigma_{\mathbf j}
			\longleftrightarrow
			M_{\mathbf j}\mathbf t.
			\label{eq:supp-single-tile-slice}
		\end{equation}
		Consequently,
		\begin{equation}
			\left(\bigcup_{\mathbf t\in\Lambda}T_{\mathbf t}\right)
			\cap\Sigma_{\mathbf j}
			\longleftrightarrow
			M_{\mathbf j}\Lambda.
			\label{eq:supp-slice-image}
		\end{equation}
		If the tiled union is an $X|Y$ Cartesian rectangle, then for every
		$\mathbf j$ there are sets $A_{\mathbf j}\subseteq\F_p^X$ and
		$B_{\mathbf j}\subseteq\F_p^Y$ such that
		\begin{equation}
			M_{\mathbf j}\Lambda=A_{\mathbf j}\times B_{\mathbf j}.
			\label{eq:supp-slice-cartesian}
		\end{equation}
	\end{lemma}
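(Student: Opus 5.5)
The plan is to unwind the definitions; no new idea is needed. Fix $\mathbf j$ and identify a cell $((j_P,r_P))_P\in\Sigma_{\mathbf j}$ with the vector $\mathbf r=(r_P)_{P\in\Z_N}\in\F_p^N$. The first step is to determine which cells of $\Sigma_{\mathbf j}$ lie in a single tile. By Eq.~\eqref{eq:supp-tile-system}, the cell lies in $T_{\mathbf t}$ exactly when $\mathbf g_{P,j_P}^{\transpose}\mathbf t=r_P$ for every $P$, that is, when $\mathbf r=M_{\mathbf j}\mathbf t$. Conversely, for fixed $\mathbf t$, the tile $T_{\mathbf t}$ meets $\Sigma_{\mathbf j}$ in exactly one cell. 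This is because $F_P(\mathbf t)$ contains exactly one pair with first coordinate $j_P$, namely $(j_P,c_{P,j_P}(\mathbf t))$, and its field symbol is the $P$th entry of $M_{\mathbf j}\mathbf t$. This gives Eq.~\eqref{eq:supp-single-tile-slice}.

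Intersection distributes over unions, so $\bigl(\bigcup_{\mathbf t\in\Lambda}T_{\mathbf t}\bigr)\cap\Sigma_{\mathbf j}=\bigcup_{\mathbf t\in\Lambda}(T_{\mathbf t}\cap\Sigma_{\mathbf j})$. Under the identification this set is $\{M_{\mathbf j}\mathbf t:\mathbf t\in\Lambda\}=M_{\mathbf j}\Lambda$, which is Eq.~\eqref{eq:supp-slice-image}. Invertibility of $M_{\mathbf j}$ from Eq.~\eqref{eq:supp-transversal-invertible} is not needed here. It only guarantees that this correspondence is a bijection $\Lambda\to M_{\mathbf j}\Lambda$, which will matter later.

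For the Cartesian statement, suppose $\bigcup_{\mathbf t\in\Lambda}T_{\mathbf t}=\mathcal A\times\mathcal B$ with $\mathcal A$ a set of $X$-cells and $\mathcal B$ a set of $Y$-cells. The slice itself factorizes as $\Sigma_{\mathbf j}=\Sigma^X_{\mathbf j}\times\Sigma^Y_{\mathbf j}$, where $\Sigma^X_{\mathbf j}$ consists of the $X$-cells with addresses $(j_P)_{P\in X}$, and similarly for $Y$. A rectangle intersected with a rectangle is a rectangle, so
\[
(\mathcal A\times\mathcal B)\cap\Sigma_{\mathbf j}=(\mathcal A\cap\Sigma^X_{\mathbf j})\times(\mathcal B\cap\Sigma^Y_{\mathbf j}).
\]
Define $A_{\mathbf j}\subseteq\F_p^X$ as the field-symbol coordinates of $\mathcal A\cap\Sigma^X_{\mathbf j}$, and $B_{\mathbf j}\subseteq\F_p^Y$ similarly. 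The identification $\Sigma_{\mathbf j}\cong\F_p^N=\F_p^X\times\F_p^Y$ respects this splitting of coordinates, so Eq.~\eqref{eq:supp-slice-cartesian} follows. These sets may be empty; that case is allowed.

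There is no real obstacle. The only point needing care is bookkeeping: checking that each tile meets each slice in exactly one cell, and that the coordinate identification is compatible with the $X|Y$ splitting.
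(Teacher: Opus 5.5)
Your proposal is correct and follows the paper's proof: you identify the unique point of $T_{\mathbf t}$ in $\Sigma_{\mathbf j}$ with $M_{\mathbf j}\mathbf t$, distribute the intersection over the union, and note that intersecting an $X|Y$ rectangle with the slice $\Sigma^X_{\mathbf j}\times\Sigma^Y_{\mathbf j}$ keeps it Cartesian in the field-symbol coordinates. You spell out the bookkeeping that the paper's two-line argument leaves implicit.
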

	
	\begin{proof}
		For fixed $\mathbf j$ and $\mathbf t$, the unique point of $T_{\mathbf t}$
		in $\Sigma_{\mathbf j}$ has field-symbol vector
		\begin{equation}
			\bigl(c_{P,j_P}(\mathbf t)\bigr)_{P\in\Z_N}
			=
			M_{\mathbf j}\mathbf t.
		\end{equation}
		This proves the first two statements.  Intersecting a Cartesian set
		$L_X\times L_Y$ with a fixed-address slice fixes only the address entries
		and therefore preserves the Cartesian product between the $X$ and $Y$
		field-symbol coordinates, proving Eq.~\eqref{eq:supp-slice-cartesian}.
	\end{proof}
	
	Fix a baseline address word $\mathbf j$ and write $M=M_{\mathbf j}$.  For
	$P\in\Z_N$ and an alternative address $k\neq j_P$, define the
	replacement-coordinate row
	\begin{equation}
		\boldsymbol{\gamma}^{(P,k)}
		=
		\mathbf g_{P,k}^{\transpose}M^{-1}
		=
		\bigl(\gamma_Q^{(P,k)}\bigr)_{Q\in\Z_N}.
		\label{eq:supp-coordinate-row}
	\end{equation}
	
	\begin{lemma}[Nonvanishing replacement minors]
		\label{lem:supp-coordinate-minor}
		Fix $P$.  Let $K_0\subseteq\Z_N\setminus\{j_P\}$ and
		$J\subseteq\Z_N$ satisfy $\abs{K_0}=\abs J=s\leq N-1$.  Then
		\begin{equation}
			\det\bigl(\gamma_Q^{(P,k)}\bigr)_{k\in K_0,\,Q\in J}\neq0.
			\label{eq:supp-coordinate-minor-det}
		\end{equation}
		In particular, every $\gamma_Q^{(P,k)}$ is nonzero, and for every
		$J$ with $\abs J\leq N-1$ the restricted rows
		$\{\boldsymbol{\gamma}^{(P,k)}_J:k\neq j_P\}$ span
		$(\F_p^J)^*$ (here $V*$ means the dual space of $V$).
	\end{lemma}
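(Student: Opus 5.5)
The plan is to identify the minor $\det(\gamma_Q^{(P,k)})_{k\in K_0,Q\in J}$ with a maximal minor of $\mathsf G$ built from columns of $\mathsf G$ itself, and then invoke the MDS property that any $N$ columns are linearly independent. Since $M=M_{\mathbf j}$ has rows $\mathbf g_{Q,j_Q}^{\transpose}$, the row $\boldsymbol\gamma^{(P,k)}$ is the coordinate vector of $\mathbf g_{P,k}$ in the basis $\{\mathbf g_{Q,j_Q}\}_{Q\in\Z_N}$ of $\F_p^N$:
\[
\mathbf g_{P,k}=\sum_{Q\in\Z_N}\gamma^{(P,k)}_Q\,\mathbf g_{Q,j_Q}.
\]
The matrix with rows $\boldsymbol\gamma^{(P,k)}$ for $k\in K_0$ and columns restricted to $J$ is therefore a submatrix of the coordinate matrix $M^{-\transpose}\mathsf G$, written in the basis of transversal columns.

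Next I would apply the standard Cramer/exchange identity. Form the $N\times N$ matrix $M'$ from $M$ by replacing, for each $Q\in J$, the row $\mathbf g_{Q,j_Q}^{\transpose}$ with one of the rows $\mathbf g_{P,k}^{\transpose}$, $k\in K_0$, under a fixed bijection $J\leftrightarrow K_0$. Then $M'M^{-1}$ agrees with the identity on the rows indexed by $\Z_N\setminus J$, and its rows indexed by $J$ are the vectors $\boldsymbol\gamma^{(P,k)}$. Expanding along the untouched identity rows gives
\[
\det(M'M^{-1})=\pm\det\bigl(\gamma_Q^{(P,k)}\bigr)_{k\in K_0,Q\in J},
\]
so the minor equals $\pm\det M'/\det M$. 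Since $\det M\neq0$ by Eq.~\eqref{eq:supp-transversal-invertible}, it suffices to show $\det M'\neq0$. The rows of $M'$ are generator columns indexed by $\{(Q,j_Q):Q\notin J\}\cup\{(P,k):k\in K_0\}$, which is $N$ labels in total.

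The key check, and the only real obstacle, is that these $N$ labels are pairwise distinct, because then the MDS property gives $\det M'\neq0$. The labels $(P,k)$ with $k\in K_0$ are distinct among themselves. They differ from each $(Q,j_Q)$ with $Q\neq P$ by their first index. They differ from $(P,j_P)$, which may occur when $P\notin J$, because $K_0\subseteq\Z_N\setminus\{j_P\}$. The hypothesis $k\neq j_P$ is exactly what makes this step work. The bound $s\leq N-1$ is automatic from $|K_0|\leq N-1$.

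For the consequences, the case $s=1$ gives $\gamma^{(P,k)}_Q\neq0$ for all $Q$ and all $k\neq j_P$. For the spanning claim, fix $J$ with $|J|\leq N-1$. There are $N-1\geq|J|$ choices of $k\neq j_P$, so we may pick $K_0$ of size $|J|$. The corresponding $|J|\times|J|$ minor is nonzero, so those restricted rows are linearly independent in the $|J|$-dimensional space $(\F_p^J)^*$ and hence span it.
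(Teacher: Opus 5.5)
Your proposal is correct and matches the paper's proof essentially step for step. You replace the rows of $M$ indexed by $J$ with $\mathbf g_{P,k}^{\transpose}$ for $k\in K_0$, note that $M'M^{-1}$ is the identity outside the rows in $J$ so its determinant is the $\gamma$-minor (the paper removes your $\pm$ with a simultaneous row/column permutation), and then conclude from $\det M'/\det M\neq0$ using the same distinctness check that depends on $k\neq j_P$; the $s=1$ consequence and the spanning statement are also derived exactly as in the paper.
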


	\begin{proof}
		Assume
		$1\leq s\leq N-1$.  Order the two sets as
		\begin{equation}
			K_0=\{k_1,\ldots,k_s\},
			\qquad
			J=\{Q_1,\ldots,Q_s\}.
			\label{eq:supp-coordinate-minor-ordering}
		\end{equation}
		Starting from the baseline transversal matrix
		\begin{equation}
			M=
			\begin{pmatrix}
				\mathbf g_{0,j_0}^{\transpose}\\
				\mathbf g_{1,j_1}^{\transpose}\\
				\vdots\\
				\mathbf g_{N-1,j_{N-1}}^{\transpose}
			\end{pmatrix},
			\label{eq:supp-coordinate-minor-baseline}
		\end{equation}
		construct $M'$ by replacing, for each $a=1,\ldots,s$, the row in
		position $Q_a$ by $\mathbf g_{P,k_a}^{\transpose}$.  Thus
		\begin{equation}
			(M')_{Q,*}
			=
			\begin{cases}
				\mathbf g_{Q,j_Q}^{\transpose},
				& Q\notin J,\\[1mm]
				\mathbf g_{P,k_a}^{\transpose},
				& Q=Q_a\in J.
			\end{cases}
			\label{eq:supp-coordinate-minor-Mprime-rows}
		\end{equation}
		
		The rows of $M'$ correspond to $N$ distinct columns of the MDS
		generator matrix.  Indeed, the inserted columns
		$\mathbf g_{P,k_1},\ldots,\mathbf g_{P,k_s}$ are mutually distinct
		because the $k_a$ are distinct; none of them coincides with the
		baseline column $\mathbf g_{P,j_P}$ because
		$K_0\subseteq\Z_N\setminus\{j_P\}$; and all remaining baseline
		columns have labels $(Q,j_Q)$ with $Q\neq P$.  Hence the MDS
		minor property gives
		\begin{equation}
			\det M'\neq0.
			\label{eq:supp-coordinate-minor-Mprime-nonsingular}
		\end{equation}
		The same property gives $\det M\neq0$.
		
		We now compute $M'M^{-1}$ explicitly.  Since the $Q$th row of
		$M$ is $\mathbf g_{Q,j_Q}^{\transpose}$ and
		$MM^{-1}=I_N$, one has
		\begin{equation}
			\mathbf g_{Q,j_Q}^{\transpose}M^{-1}
			=
			\mathbf e_Q^{\transpose},
			\qquad Q\in\Z_N,
			\label{eq:supp-coordinate-minor-standard-row}
		\end{equation}
		where $\mathbf e_Q$ denotes the $Q$th standard basis vector of
		$\F_p^N$.  For an inserted row, the definition
		\begin{equation}
			\boldsymbol{\gamma}^{(P,k)}
			=
			\mathbf g_{P,k}^{\transpose}M^{-1}
			=
			\bigl(
			\gamma_0^{(P,k)},\ldots,
			\gamma_{N-1}^{(P,k)}
			\bigr)
			\label{eq:supp-coordinate-minor-gamma-row}
		\end{equation}
		gives
		\begin{equation}
			\mathbf g_{P,k_a}^{\transpose}M^{-1}
			=
			\boldsymbol{\gamma}^{(P,k_a)}.
			\label{eq:supp-coordinate-minor-inserted-row}
		\end{equation}
		Consequently,
		\begin{equation}
			(M'M^{-1})_{Q,*}
			=
			\begin{cases}
				\mathbf e_Q^{\transpose},
				& Q\notin J,\\[1mm]
				\boldsymbol{\gamma}^{(P,k_a)},
				& Q=Q_a\in J.
			\end{cases}
			\label{eq:supp-coordinate-minor-product-rows}
		\end{equation}
		Thus $M'M^{-1}$ is obtained from the identity matrix by replacing
		exactly the rows indexed by $J$ with the corresponding
		$\boldsymbol{\gamma}^{(P,k_a)}$.
		
		To display the determinant structure explicitly, write
		\begin{equation}
			J^c
			=
			\{R_1,\ldots,R_{N-s}\}.
			\label{eq:supp-coordinate-minor-complement}
		\end{equation}
		Let $\mathsf P_J$ be the permutation matrix which simultaneously
		reorders the row and column indices from
		$0,1,\ldots,N-1$ to
		\begin{equation}
			R_1,\ldots,R_{N-s},Q_1,\ldots,Q_s.
			\label{eq:supp-coordinate-minor-permutation-order}
		\end{equation}
		Using Eq.~\eqref{eq:supp-coordinate-minor-product-rows}, the
		simultaneously permuted matrix has the block form
		\begin{equation}
			\mathsf P_J(M'M^{-1})\mathsf P_J^{\transpose}
			=
			\begin{pmatrix}
				I_{N-s} & 0\\[1mm]
				\Gamma_{J^c} & \Gamma_J
			\end{pmatrix},
			\label{eq:supp-coordinate-minor-block-form}
		\end{equation}
		where
		\begin{equation}
			\Gamma_{J^c}
			=
			\begin{pmatrix}
				\gamma_{R_1}^{(P,k_1)}
				& \cdots
				& \gamma_{R_{N-s}}^{(P,k_1)}
				\\
				\vdots
				& \ddots
				& \vdots
				\\
				\gamma_{R_1}^{(P,k_s)}
				& \cdots
				& \gamma_{R_{N-s}}^{(P,k_s)}
			\end{pmatrix}
			\label{eq:supp-coordinate-minor-left-block}
		\end{equation}
		and
		\begin{equation}
			\Gamma_J
			=
			\begin{pmatrix}
				\gamma_{Q_1}^{(P,k_1)}
				& \gamma_{Q_2}^{(P,k_1)}
				& \cdots
				& \gamma_{Q_s}^{(P,k_1)}
				\\
				\gamma_{Q_1}^{(P,k_2)}
				& \gamma_{Q_2}^{(P,k_2)}
				& \cdots
				& \gamma_{Q_s}^{(P,k_2)}
				\\
				\vdots
				& \vdots
				& \ddots
				& \vdots
				\\
				\gamma_{Q_1}^{(P,k_s)}
				& \gamma_{Q_2}^{(P,k_s)}
				& \cdots
				& \gamma_{Q_s}^{(P,k_s)}
			\end{pmatrix}
			=
			\bigl(
			\gamma_{Q_b}^{(P,k_a)}
			\bigr)_{a,b=1}^{s}.
			\label{eq:supp-coordinate-minor-right-block}
		\end{equation}
		The upper-right block in
		Eq.~\eqref{eq:supp-coordinate-minor-block-form} vanishes because,
		for every $R_b\in J^c$, the unreplaced row
		$\mathbf e_{R_b}^{\transpose}$ has support only in the column
		$R_b$, which also belongs to $J^c$.
		
		The matrix in Eq.~\eqref{eq:supp-coordinate-minor-block-form} is
		block lower triangular.  Therefore,
		\begin{equation}
			\det\!\left[
			\mathsf P_J(M'M^{-1})\mathsf P_J^{\transpose}
			\right]
			=
			\det(I_{N-s})\det(\Gamma_J)
			=
			\det(\Gamma_J).
			\label{eq:supp-coordinate-minor-block-det}
		\end{equation}
		On the other hand, since $\mathsf P_J$ is a permutation matrix,
		$\det(\mathsf P_J)=\pm1$, and hence
		\begin{align}
			\det\!\left[
			\mathsf P_J(M'M^{-1})\mathsf P_J^{\transpose}
			\right]
			&=
			\det(\mathsf P_J)
			\det(M'M^{-1})
			\det(\mathsf P_J^{\transpose})
			\nonumber\\
			&=
			\det(\mathsf P_J)^2
			\det(M'M^{-1})
			\nonumber\\
			&=
			\det(M'M^{-1}).
			\label{eq:supp-coordinate-minor-permutation-det}
		\end{align}
		Combining
		Eqs.~\eqref{eq:supp-coordinate-minor-block-det} and
		\eqref{eq:supp-coordinate-minor-permutation-det} yields the exact
		identity
		\begin{equation}
			\det(M'M^{-1})
			=
			\det(\Gamma_J)
			=
			\det
			\bigl(
			\gamma_{Q_b}^{(P,k_a)}
			\bigr)_{a,b=1}^{s}.
			\label{eq:supp-coordinate-minor-exact-det}
		\end{equation}
		There is no sign ambiguity in
		Eq.~\eqref{eq:supp-coordinate-minor-exact-det}, because the same
		permutation has been applied to both rows and columns.
		
		Finally,
		\begin{equation}
			\det(M'M^{-1})
			=
			\det(M')\det(M^{-1})
			=
			\frac{\det M'}{\det M}.
			\label{eq:supp-coordinate-minor-ratio}
		\end{equation}
		Both determinants on the right-hand side are nonzero, and therefore
		\begin{equation}
			\det
			\bigl(
			\gamma_{Q_b}^{(P,k_a)}
			\bigr)_{a,b=1}^{s}
			\neq0.
			\label{eq:supp-coordinate-minor-det-ordered}
		\end{equation}
		This proves the asserted nonvanishing minor.
		
		For the first consequence, take $s=1$,
		$K_0=\{k\}$, and $J=\{Q\}$.  Equation
		\eqref{eq:supp-coordinate-minor-det-ordered} then reduces to
		\begin{equation}
			\gamma_Q^{(P,k)}\neq0
			\qquad
			(k\neq j_P,\; Q\in\Z_N).
			\label{eq:supp-coordinate-minor-entry-nonzero}
		\end{equation}
		
		For the spanning statement, fix a nonempty
		$J\subseteq\Z_N.$ 
		There are exactly $N-1$ choices of
		$k\in\Z_N\setminus\{j_P\}$, so we may choose distinct
		$k_1,\ldots,k_s$ from this set.  Applying the determinant statement
		with
		\begin{equation}
			K_0=\{k_1,\ldots,k_s\}
		\end{equation}
		shows that the $s\times s$ matrix
		\begin{equation}
			\bigl(
			\gamma_Q^{(P,k)}
			\bigr)_{k\in K_0,\,Q\in J}
		\end{equation}
		is nonsingular.  Hence the $s$ restricted rows
		\begin{equation}
			\boldsymbol{\gamma}^{(P,k_1)}_J,
			\ldots,
			\boldsymbol{\gamma}^{(P,k_s)}_J
		\end{equation}
		are linearly independent in the $s$-dimensional space
		$(\F_p^J)^*$ and therefore form a basis of that space.  The full
		family
		\begin{equation}
			\{
			\boldsymbol{\gamma}^{(P,k)}_J:
			k\neq j_P
			\}
		\end{equation}
		consequently spans $(\F_p^J)^*$. 
	\end{proof}

	\begin{lemma}[Prime-field translation]
		\label{lem:supp-prime-translation}
		If $\varnothing\neq E\subseteq\F_p$ and $E+\delta=E$ for some
		$\delta\neq0$, then $E=\F_p$.
	\end{lemma}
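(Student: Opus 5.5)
The plan is to iterate the translation and use that a nonzero element generates all of $\F_p$ additively. Since $p$ is prime, the additive group of $\F_p$ is the cyclic group $\Z/p\Z$. Any nonzero $\delta$ has additive order exactly $p$, so the subgroup $\langle\delta\rangle=\{m\delta:m\in\Z\}$ is all of $\F_p$.

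First, from $E+\delta=E$ I obtain by induction $E+m\delta=E$ for every integer $m\geq0$. The inductive step is
\[
E+(m+1)\delta=(E+m\delta)+\delta=E+\delta=E.
\]
Second, I pick any $e\in E$, which is possible since $E\neq\varnothing$. Then $e+m\delta\in E$ for all $m=0,1,\ldots,p-1$. These $p$ elements are pairwise distinct: if $e+m\delta=e+m'\delta$ with $0\leq m<m'\leq p-1$, then $(m'-m)\delta=0$ in $\F_p$. Because $\delta\neq0$ and $\F_p$ is a field, this forces $m'-m\equiv0\pmod p$, which is impossible since $0<m'-m<p$.

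Hence $E$ contains $p$ distinct elements of $\F_p$, so $E=\F_p$. There is no real obstacle here. The only point to flag is where primality is used: it guarantees that a nonzero $\delta$ is invertible, so that $(m'-m)\delta=0$ forces $m'\equiv m$. This is exactly why the construction is restricted to prime fields. Over $\F_{p^m}$ with $m>1$, a translation-invariant set could be a proper additive subgroup, such as an $\F_p$-line, and the lemma would fail.
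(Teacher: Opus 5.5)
Your proof is correct and follows the paper's argument: both start from a point $e\in E$, iterate the translation, and use that a nonzero $\delta$ has additive order $p$, so the $p$ translates $e+m\delta$ exhaust $\F_p$. One small correction to your closing remark: nonzero $\delta$ is invertible in every field, and your distinctness step only needs characteristic $p$, so what primality actually supplies is $\abs{\F_p}=p$, which is why $p$ distinct translates fill the whole field (and why the lemma fails in $\F_{p^m}$ for $m>1$).
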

	
	\begin{proof}
		Every nonzero element of the additive group of the prime field $\F_p$ has
		order $p$.  Thus for any $e\in E$, translation invariance gives
		$e+m\delta\in E$ for all integers $m$, and these $p$ values exhaust
		$\F_p$.
	\end{proof}
	
	\begin{lemma}[Filling one side of a Cartesian slice]
		\label{lem:supp-one-side-full}
		Assume that $M_{\mathbf j}\Lambda$ is Cartesian across $X|Y$ for every
		address word $\mathbf j$.  Fix one baseline and write
		\begin{equation}
			M\Lambda=A\times B,
			\qquad
			A\subseteq\F_p^X,
			\quad
			B\subseteq\F_p^Y.
			\label{eq:supp-baseline-cartesian}
		\end{equation}
		If $\abs B\geq2$, then $A=\F_p^X$.  Symmetrically, if
		$\abs A\geq2$, then $B=\F_p^Y$.
	\end{lemma}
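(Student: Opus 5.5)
Write $\mathbf u=M\mathbf t$ for the baseline coordinates, so that $M\Lambda=A\times B$ with $\mathbf u=(\mathbf u_X,\mathbf u_Y)$. The plan is to compare the baseline slice with the slices obtained by changing the address of a single party $P\in X$. Each such comparison shows that $A$ is a union of full lines in the direction $\mathbf e_P$. Doing this for every $P\in X$ then forces $A=\F_p^X$. The symmetric statement follows by exchanging the roles of $X$ and $Y$.

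\emph{Step 1 (effect of a one-row replacement).} Fix $P\in X$ and $k\neq j_P$, and let $\mathbf j'$ be $\mathbf j$ with $j_P$ replaced by $k$. The matrix $M_{\mathbf j'}$ differs from $M$ only in row $P$, which becomes $\mathbf g_{P,k}^{\transpose}=\boldsymbol\gamma^{(P,k)}M$. Hence $M_{\mathbf j'}\mathbf t$ equals $\mathbf u$ with its $P$th entry replaced by $\boldsymbol\gamma^{(P,k)}\mathbf u=\boldsymbol\gamma_X\cdot\mathbf u_X+\boldsymbol\gamma_Y\cdot\mathbf u_Y$. Consequently
\[
M_{\mathbf j'}\Lambda=\{(\phi_{\mathbf b}(\mathbf a),\mathbf b):\mathbf a\in A,\ \mathbf b\in B\}.
\]
Here $\phi_{\mathbf b}$ fixes the coordinates in $X\setminus\{P\}$ and sends $a_P$ to $\boldsymbol\gamma_X\cdot\mathbf a+\boldsymbol\gamma_Y\cdot\mathbf b$. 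Since $\gamma_P^{(P,k)}\neq0$ by Lemma~\ref{lem:supp-coordinate-minor}, each $\phi_{\mathbf b}$ is an affine bijection of $\F_p^X$ that preserves every $\mathbf e_P$-line. Moreover, $\phi_{\mathbf b'}=\phi_{\mathbf b}+\delta\,\mathbf e_P$, where $\delta=\boldsymbol\gamma^{(P,k)}_Y\cdot(\mathbf b'-\mathbf b)$.

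\emph{Step 2 (Cartesianity gives translation invariance).} By hypothesis $M_{\mathbf j'}\Lambda$ is Cartesian across $X|Y$. Its $X$-fiber over a point $\mathbf b\in B$ is $\phi_{\mathbf b}(A)$, and for a Cartesian set this fiber cannot depend on $\mathbf b$. Pick $\mathbf b\neq\mathbf b'$ in $B$, which is possible since $\abs B\geq2$. Then $\phi_{\mathbf b}(A)=\phi_{\mathbf b}(A)+\delta\,\mathbf e_P$. Next we choose $k$ so that $\delta\neq0$. Since $\abs Y\leq N-1$, Lemma~\ref{lem:supp-coordinate-minor} with $J=Y$ says the rows $\boldsymbol\gamma^{(P,k)}_Y$, $k\neq j_P$, span $(\F_p^Y)^*$. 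Therefore some $k$ gives a functional that does not annihilate the nonzero vector $\mathbf b'-\mathbf b$. Applying Lemma~\ref{lem:supp-prime-translation} to each nonempty intersection of $\phi_{\mathbf b}(A)$ with an $\mathbf e_P$-line shows that $\phi_{\mathbf b}(A)$ is a union of full $\mathbf e_P$-lines. Pulling back by the line-preserving bijection $\phi_{\mathbf b}$ shows that $A$ itself is a union of full $\mathbf e_P$-lines.

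\emph{Step 3 (conclusion).} Repeat Steps 1--2 for every $P\in X$, keeping the same $\mathbf b,\mathbf b'$ and choosing $k$ afresh for each $P$. Then $A$ is saturated in every coordinate direction of $\F_p^X$. Since $A\neq\varnothing$ (because $\Lambda\neq\varnothing$), it follows that $A=\F_p^X$.

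The main obstacle is Step 2: one must guarantee a replacement $k$ with $\delta\neq0$ while still keeping $\gamma_P^{(P,k)}\neq0$, so that the pullback is legitimate. Both requirements are exactly what the nonvanishing-minor lemma supplies. The first uses $\abs Y\leq N-1$, which holds because the bipartition is nontrivial.
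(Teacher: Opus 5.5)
Your proof is correct and follows essentially the same route as the paper: a single-row address replacement at $P\in X$, the spanning property of the restricted rows $\boldsymbol\gamma^{(P,k)}_Y$ (using $\abs Y\leq N-1$) to obtain a nonzero shift, the prime-field translation lemma on each $\mathbf e_P$-line, and saturation in every coordinate direction of $X$. The only cosmetic difference is that you prove translation invariance for the transformed set $\phi_{\mathbf b}(A)$ and then pull back along the line-preserving bijection $\phi_{\mathbf b}$, whereas the paper works directly on the fibers $A_{\mathbf u}$ with the rescaled shift $\gamma_P^{-1}\boldsymbol\gamma_Y\cdot(\mathbf b'-\mathbf b)$.
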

	
	\begin{proof}
		Assume $\abs B\geq2$ and fix $P\in X$.  Choose distinct
		$\mathbf b,\mathbf b'\in B$.  Since
		$\boldsymbol{\gamma}^{(P,k)}_Y$, $k\neq j_P$, span $(\F_p^Y)^*$,
		there is an alternative address $k$ such that
		\begin{equation}
			\boldsymbol{\gamma}^{(P,k)}_Y\cdot(\mathbf b-\mathbf b')\neq0.
			\label{eq:supp-detect-displacement}
		\end{equation}
		Write $\boldsymbol\gamma=\boldsymbol{\gamma}^{(P,k)}$ and let $M'$ be the
		transversal matrix obtained by replacing only the address $j_P$ by $k$.
		The map $M'M^{-1}$ leaves every coordinate except $P$ unchanged.  Writing
		an $X$-coordinate as $(\mathbf u,s)$ with
		$\mathbf u\in\F_p^{X\setminus\{P\}}$, it acts as
		\begin{equation}
			(\mathbf u,s,\mathbf b)
			\longmapsto
			\left(
			\mathbf u,
			\gamma_Ps+\boldsymbol\gamma_{X\setminus\{P\}}\cdot\mathbf u
			+\boldsymbol\gamma_Y\cdot\mathbf b,
			\mathbf b
			\right).
			\label{eq:supp-coordinate-change}
		\end{equation}
		By Lemma~\ref{lem:supp-coordinate-minor}, $\gamma_P\neq0$.
		
		The new slice image is Cartesian, say $M'\Lambda=A'\times B'$.  Since
		Eq.~\eqref{eq:supp-coordinate-change} leaves the $Y$-coordinate unchanged,
		its $Y$-projection is exactly $B$, hence $B'=B$.  Therefore the $X$-section
		of $A'\times B$ equals the same set $A'$ for every $\mathbf b\in B$.
		For fixed $\mathbf u$, define
		\begin{equation}
			A_{\mathbf u}=\{s\in\F_p:(\mathbf u,s)\in A\}.
		\end{equation}
		The $P$-fiber in the transformed $X$-section associated with $\mathbf b$
		is
		\begin{equation}
			\gamma_PA_{\mathbf u}
			+\boldsymbol\gamma_{X\setminus\{P\}}\cdot\mathbf u
			+\boldsymbol\gamma_Y\cdot\mathbf b.
		\end{equation}
		Equality of the transformed $X$-sections for $\mathbf b$ and
		$\mathbf b'$ therefore implies, for every nonempty $A_{\mathbf u}$,
		\begin{equation}
			A_{\mathbf u}+\delta=A_{\mathbf u},
			\qquad
			\delta=
			\gamma_P^{-1}\boldsymbol\gamma_Y\cdot(\mathbf b'-\mathbf b).
			\label{eq:supp-fiber-translation}
		\end{equation}
		Equation~\eqref{eq:supp-detect-displacement} gives $\delta\neq0$, so
		Lemma~\ref{lem:supp-prime-translation} yields
		$A_{\mathbf u}=\F_p$ for every nonempty fiber.  Since the argument applies
		to every $P\in X$, begin with any point of the nonempty set $A$ and change
		its coordinates successively and arbitrarily.  Every vector of $\F_p^X$
		is reached, so $A=\F_p^X$.  The proof with $X$ and $Y$ interchanged is
		identical.
	\end{proof}
	
	\begin{proof}[Proof of Lemma~\ref{lem:rectangle-rigidity}]
		Assume
		$D_\Lambda=\bigcup_{\mathbf t\in\Lambda}T_{\mathbf t}$ is an $X|Y$
		Cartesian rectangle and $\abs\Lambda\geq2$.  By
		Lemma~\ref{lem:supp-slice-image}, every fixed-address image is Cartesian.
		For one baseline write $M\Lambda=A\times B$.  Since $M$ is invertible,
		\begin{equation}
			\abs\Lambda=\abs A\,\abs B\geq2.
		\end{equation}
		Thus at least one of $A,B$ contains at least two points.  If
		$\abs B\geq2$, Lemma~\ref{lem:supp-one-side-full} gives
		$A=\F_p^X$; since $X\neq\varnothing$, now $\abs A\geq2$, and the symmetric
		part of the same lemma gives $B=\F_p^Y$.  The case $\abs A\geq2$ is
		symmetric.  Hence $M\Lambda=\F_p^N$, and invertibility of $M$ gives
		$\Lambda=\F_p^N$.
	\end{proof}
	
	\begin{lemma}[Projected-tile incidence connectivity]
		\label{lem:incidence}
		Let $\varnothing\neq Z\subsetneq\Z_N$.  Form a graph whose vertices are
		computational cells of the block $Z$, joining two cells when they belong to
		a common projected tile $F_Z(\mathbf t)$.  Then the graph is connected.
	\end{lemma}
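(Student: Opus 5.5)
We prove connectivity in two moves: first the $Z$-cells sharing an address word are connected, and then moving one address coordinate at a time links different address words.

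Cells of $Z$ are tuples $((j_P,r_P))_{P\in Z}$. Two cells $q,q'$ are joined exactly when there is a message $\mathbf t$ with $c_{P,j_P}(\mathbf t)=r_P$ and $c_{P,j'_P}(\mathbf t)=r'_P$ for all $P\in Z$. This is a linear system in $\mathbf t$ whose coefficient rows are the generator columns $\mathbf g_{P,j_P}$ and $\mathbf g_{P,j'_P}$, $P\in Z$.

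\textbf{Step 1 (same address word).} Suppose $q$ and $q'$ differ only in field symbols, with the same addresses $(j_P)_{P\in Z}$. Choose, for each $P\in Z$, a second address $k_P\neq j_P$. Let $W\subseteq Z$ be the set of parties where the symbols differ. We seek $\mathbf t$ with
\begin{equation*}
\mathbf t^{\transpose}\mathbf g_{P,j_P}=r_P\ (P\in Z),
\qquad
\mathbf t^{\transpose}\mathbf g_{P,k_P}=\rho_P\ (P\in W),
\end{equation*}
for prescribed $\rho_P$. This system has $|Z|+|W|$ equations. Every equation uses a distinct generator column, so the MDS property makes the system solvable whenever $|Z|+|W|\leq N$. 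The cell $((j_P,r_P))_Z$ then lies in $F_Z(\mathbf t)$, together with cells using address $k_P$ at the positions $P\in W$. Chaining gives $q\sim\tilde q\sim q'$, where $\tilde q$ is a cell with addresses $k_P$ on $W$. For the second link we need one tile containing both $\tilde q$ and $q'$, which imposes constraints at addresses $k_P$ and $j_P$ with symbols $r'_P$.

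A cleaner route is to change one coordinate at a time. Two cells that agree except for the symbol at a single party $P$ are linked through an intermediate cell in which $P$ carries address $k$. Take a first tile $\mathbf t$ through $q$ and a second tile $\mathbf t'$ through $q'$, both fixing the same symbols on $Z\setminus\{P\}$ at addresses $j$, and both fixing the same value $\rho$ at $(P,k)$. Each tile is determined by $|Z|+1\le N$ constraints, so both exist. The intermediate cell is common to the two tiles: it has address $k$ at $P$ with symbol $\rho$, and addresses $j$ elsewhere with symbols $r$. It is linked to $q$ via $\mathbf t$ and to $q'$ via $\mathbf t'$.

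\textbf{Step 2 (changing addresses).} Link a cell with address $j_P$ to one with address $k$ at $P$ directly. A single tile containing both needs only $|Z|+1\le N$ column constraints, and these are solvable by MDS.

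Combining the two steps, any two cells are joined by a path that alters one party's address or symbol at a time. Hence the graph is connected.

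The main obstacle, and the only place where $Z\neq\Z_N$ is used, is the counting bound $|Z|+1\le N$. It holds because $|Z|\le N-1$, so $N$ independent columns always accommodate the extra constraint. If $Z=\Z_N$ the tiles partition the cells, the graph would be disconnected, and the argument would fail exactly here. Distinctness of the columns used follows since the added constraint at $(P,k)$ uses $k\neq j_P$.
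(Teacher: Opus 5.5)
Your proof is correct and follows essentially the same route as the paper. Both reduce to cells differing at one party $P$. Cells with different addresses at $P$ are joined directly by a tile cut out by $|Z|+1\le N$ distinct, hence independent, generator columns. Cells with the same address but different symbols at $P$ are joined through an intermediate cell carrying an auxiliary address $k\neq j_P$. The first attempt in Step 1, with the set $W$ and $|Z|+|W|$ equations, is superseded by your ``cleaner route.'' It should be deleted, since its solvability requirement $|Z|+|W|\le N$ can fail.
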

	
	\begin{proof}
		It is enough to connect two block cells that differ at a single party
		$P\in Z$, since arbitrary cells can then be connected one party at a time.
		Suppose first that their local addresses at $P$ are different:
		$q_P=(j,r)$ and $q_P'=(j',r')$ with $j\neq j'$.  To place both block cells
		in a common $F_Z(\mathbf t)$, impose
		\begin{equation}
			\mathbf t^{\transpose}\mathbf g_{P,j}=r,
			\qquad
			\mathbf t^{\transpose}\mathbf g_{P,j'}=r',
			\label{eq:supp-two-P-equations}
		\end{equation}
		together with the common equation at every party in $Z\setminus\{P\}$.
		There are $\abs Z+1\leq N$ equations, and their coefficient rows are the
		transposes of distinct generator columns.  These rows are independent.  The induced
		linear map from $\F_p^N$ to $\F_p^{\abs Z+1}$ has full row rank and is
		therefore surjective, so the prescribed right-hand sides have a solution
		$\mathbf t$.  The two cells are adjacent.
		
		If the two local cells have the same address $j$ but different symbols
		$r\neq r'$, choose an auxiliary address $k\neq j$ and any $s\in\F_p$.
		Insert the intermediate local cell $(k,s)$, keeping every other block
		coordinate fixed.  The first part gives an edge from the first cell to the
		intermediate one and another edge from the intermediate cell to the second.
		Thus cells differing at one party are connected by a path of length at most
		two, and concatenating such paths proves connectivity.
	\end{proof}

	\section{Complementary subspace, partial-transpose symmetry, and GME witnesses}
	\label{sec:supp-witness}

	\begin{proposition}[Exact complementary projector]
		\label{prop:supp-complement}
		Let
		\begin{equation}
			\ket{e_{\mathbf t}}=N^{-N/2}\ket{\psi_{\mathbf t}},
			\qquad
			\ket s=D^{-1/2}\ket S.
			\label{eq:supp-normalized-hole-stopper}
		\end{equation}
		Then $\{\ket{e_{\mathbf t}}\}_{\mathbf t\in\F_p^N}$ is orthonormal,
		\begin{equation}
			\ket s=p^{-N/2}\sum_{\mathbf t\in\F_p^N}\ket{e_{\mathbf t}},
			\label{eq:supp-normalized-stopper}
		\end{equation}
		and
		\begin{align}
			\cG_{\cC}
			&=
			\left\{
			\sum_{\mathbf t}a_{\mathbf t}\ket{\psi_{\mathbf t}}:
			\sum_{\mathbf t}a_{\mathbf t}=0
			\right\},
			\label{eq:supp-complement-space}\\
			\Pi_{\cG}
			&=
			\sum_{\mathbf t\in\F_p^N}\proj{e_{\mathbf t}}-\proj{s}.
			\label{eq:supp-complement-projector}
		\end{align}
		In particular, $\dim\cG_{\cC}=p^N-1$.
	\end{proposition}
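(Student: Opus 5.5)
The plan is to work entirely inside the tile decomposition of Proposition~\ref{prop:supp-construction-checks} and obtain every claim by direct computation. First, orthonormality of $\{\ket{e_{\mathbf t}}\}$. For $\mathbf t\neq\mathbf t'$ the vectors $\ket{\psi_{\mathbf t}}$ and $\ket{\psi_{\mathbf t'}}$ have disjoint computational supports $T_{\mathbf t}$ and $T_{\mathbf t'}$, because the tiles partition the grid; hence they are orthogonal. For the norm, Eq.~\eqref{eq:supp-local-fourier-orthogonality} with $a=b=0$ gives $\braket{f_{P,\mathbf t}(0)}{f_{P,\mathbf t}(0)}=N$. Taking the tensor product over $N$ parties gives $\norm{\psi_{\mathbf t}}^2=N^N$, so the factor $N^{-N/2}$ normalizes $\ket{\psi_{\mathbf t}}$.

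Second, the stopper identity. Each $\ket{\psi_{\mathbf t}}$ is the indicator vector of $T_{\mathbf t}$ in the computational basis, since all its coefficients equal $1$. Summing over the partition gives the all-ones vector, which is $\ket S$; this is Eq.~\eqref{eq:stopper-sum}. Dividing by $D^{1/2}=(Np)^{N/2}=p^{N/2}N^{N/2}$ yields $\ket s=p^{-N/2}\sum_{\mathbf t}\ket{e_{\mathbf t}}$.

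Third, the description of $\cG_{\cC}$. By Proposition~\ref{prop:supp-construction-checks}, each $\cB_{\mathbf t}$ is an orthogonal basis of $\cH_{\mathbf t}$, and $\cH=\bigoplus_{\mathbf t}\cH_{\mathbf t}$. It follows that the orthogonal complement of the retained tile vectors is $\cV:=\Span\{\ket{e_{\mathbf t}}\}$, which is Eq.~\eqref{eq:deleted-mode-span}. Adding the stopper intersects $\cV$ with $\ket s^\perp$. For $\ket\psi=\sum a_{\mathbf t}\ket{\psi_{\mathbf t}}$, orthonormality gives $\braket{s}{\psi}\propto\sum_{\mathbf t}a_{\mathbf t}$. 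This proves Eq.~\eqref{eq:supp-complement-space}.

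Finally, the projector. The orthogonal projector onto $\cV$ is $\sum_{\mathbf t}\proj{e_{\mathbf t}}$, and $\ket s$ is a unit vector in $\cV$. The projector onto $\cV\cap\ket s^\perp$ is therefore $\sum_{\mathbf t}\proj{e_{\mathbf t}}-\proj s$. The dimension is $p^N-1$, since $\dim\cV=p^N$ and we remove one direction.

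No step is a genuine obstacle. The only care needed is the normalization bookkeeping, namely checking that $D^{1/2}=p^{N/2}N^{N/2}$ so that the stopper coefficient comes out exactly $p^{-N/2}$. One should also note that $\ket s\in\cV$; this is needed so that subtracting $\proj s$ gives a genuine projector rather than an indefinite operator.
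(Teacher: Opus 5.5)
Your proposal is correct and follows the paper's proof essentially step for step: orthogonality from disjoint tile supports together with $\norm{\psi_{\mathbf t}}^2=N^N$, the identity $\ket S=\sum_{\mathbf t}\ket{\psi_{\mathbf t}}$ from the tile partition, and then $\cG_{\cC}=\ran Q\cap\ket s^\perp$ with projector $Q-\proj{s}$ for $Q=\sum_{\mathbf t}\proj{e_{\mathbf t}}$. You also spell out two points the paper leaves implicit: why the complement of the retained tile vectors is $\Span\{\ket{e_{\mathbf t}}\}$, and why orthogonality to $\ket s$ is exactly the condition $\sum_{\mathbf t}a_{\mathbf t}=0$.
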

	
	\begin{proof}
		The supports of the deleted modes are distinct tiles, hence orthogonal, and
		$\norm{\psi_{\mathbf t}}^2=N^N$, proving orthonormality after the stated
		normalization.  The tile partition gives
		$\ket S=\sum_{\mathbf t}\ket{\psi_{\mathbf t}}$, while
		$\norm S^2=D=(Np)^N$; this proves Eq.~\eqref{eq:supp-normalized-stopper}.
		
		If
		$Q=\sum_{\mathbf t}\proj{e_{\mathbf t}}$ is the projector onto the deleted-mode
		span, then $\ket s\in\ran Q$ and $\cG_{\cC}=\ran Q\cap\ket s^\perp$.
		Hence its orthogonal projector is $Q-\proj{s}$, proving
		Eq.~\eqref{eq:supp-complement-projector} and the dimension statement.
	\end{proof}
	
	\begin{proposition}[Exact partial transpose symmetry]
		\label{prop:supp-pt}
		For every subset $X\subseteq\Z_N$,
		\begin{equation}
			\Pi_{\cG}^{T_X}=\Pi_{\cG},
			\qquad
			P_{\cU}^{T_X}=P_{\cU},
			\qquad
			\rho_{\cG}^{T_X}=\rho_{\cG}.
			\label{eq:supp-pt-invariance}
		\end{equation}
	\end{proposition}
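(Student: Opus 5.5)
The plan is to work directly from the explicit decomposition of Proposition~\ref{prop:supp-complement},
\begin{equation*}
\Pi_{\cG}=\sum_{\mathbf t\in\F_p^N}\proj{e_{\mathbf t}}-\proj{s}.
\end{equation*}
Partial transposition $T_X$ is linear, so it suffices to show that each rank-one term is fixed by $T_X$.

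Each $\ket{e_{\mathbf t}}$ and $\ket s$ is a fully product vector,
\begin{equation*}
\ket{e_{\mathbf t}}=N^{-N/2}\bigotimes_P\ket{f_{P,\mathbf t}(0)},
\end{equation*}
and $\ket s$ is the normalized tensor product of the uniform vectors $\ket{s_P}$.
\begin{itemize}
\item The local factors $\ket{f_{P,\mathbf t}(0)}=\sum_j\ket{j,c_{P,j}(\mathbf t)}_P$ and $\ket{s_P}$ have all computational coefficients equal to $0$ or $1$, hence real.
\item For a product vector $\bigotimes_P\ket{v_P}$, the projector is $\bigotimes_P\proj{v_P}$.
\item Partial transposition on $X$ transposes exactly the factors with $P\in X$, giving $\bigotimes_{P\in X}\proj{v_P}^{\transpose}\otimes\bigotimes_{P\notin X}\proj{v_P}$.
\item Since $\proj{v}^{\transpose}=\ket{\bar v}\!\bra{\bar v}$ and each $v_P$ is real, every factor is unchanged.
\end{itemize}
Summing the terms gives $\Pi_{\cG}^{T_X}=\Pi_{\cG}$.

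The other two identities follow by linearity. Because $I^{T_X}=I$, we get $P_{\cU}^{T_X}=I-\Pi_{\cG}^{T_X}=P_{\cU}$. Because $\rho_{\cG}=\Pi_{\cG}/R$ with $R$ a scalar, we get $\rho_{\cG}^{T_X}=\rho_{\cG}$.

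The only point needing care is the convention that partial transposition is taken in the computational basis $\{\ket{j,r}_P\}$, where reality of the coefficients is checked. The Fourier phases $\zeta^{aj}$ appear only in the retained vectors, never in the deleted modes or the stopper, so they cause no trouble. The edge cases $X=\varnothing$ and $X=\Z_N$ are covered by the same argument. No step is a real obstacle; the whole argument reduces to the closed form of $\Pi_{\cG}$ already established.
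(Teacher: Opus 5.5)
Your proposal is correct and follows essentially the same route as the paper: you use the closed form $\Pi_{\cG}=\sum_{\mathbf t}\proj{e_{\mathbf t}}-\proj{s}$ and observe that each term is a fully product projector with real local coefficients, hence fixed by $T_X$. The identities for $P_{\cU}$ and $\rho_{\cG}$ then follow by linearity, exactly as in the paper.
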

	
	\begin{proof}
		For each $\mathbf t$,
		\begin{equation}
			\ket{e_{\mathbf t}}
			=
			\bigotimes_{P\in\Z_N}
			\left(
			N^{-1/2}\sum_{j\in\Z_N}
			\ket{j,c_{P,j}(\mathbf t)}_P
			\right),
			\label{eq:supp-hole-real-product}
		\end{equation}
		so every local factor has real coefficients in the computational basis.
		The normalized stopper $\ket s$ is likewise a fully product vector with
		real local coefficients.  Therefore, for every subset $X$,
		\begin{equation}
			\bigl(\proj{e_{\mathbf t}}\bigr)^{T_X}=\proj{e_{\mathbf t}},
			\qquad
			(\proj{s})^{T_X}=\proj{s}.
		\end{equation}
		Applying $T_X$ to Eq.~\eqref{eq:supp-complement-projector} gives
		$\Pi_{\cG}^{T_X}=\Pi_{\cG}$.  Since $P_{\cU}=I-\Pi_{\cG}$ and
		$\rho_{\cG}=\Pi_{\cG}/R$, the remaining identities follow immediately.
	\end{proof}
	
	\begin{proposition}[Positive biseparable threshold]
		\label{prop:supp-positive-gap}
		The quantity $\epsilon_G$ in Eq.~\eqref{eq:biseparable-threshold} is
		strictly positive.
	\end{proposition}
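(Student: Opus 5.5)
The plan is a compactness argument that reduces the question to product vectors. First we show that the minimum over $\BSEP$ is attained, and that it is attained at a pure state that is product across some bipartition.

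The map $\sigma\mapsto\Tr(P_{\cU}\sigma)$ is linear, and $\BSEP$ is the convex hull of the finitely many sets
\begin{equation*}
\mathcal S_{X|Y}=\{\proj{\alpha}_X\otimes\proj{\beta}_Y:\norm\alpha=\norm\beta=1\}.
\end{equation*}
Each $\mathcal S_{X|Y}$ is compact, being the continuous image of a product of unit spheres in finite dimension. The convex hull of a finite union of compact sets is again compact, so the minimum defining $\epsilon_G$ exists. By linearity, it is attained at an extreme point. Hence
\begin{equation*}
\epsilon_G=\min_{X|Y}\ \min_{\norm\alpha=\norm\beta=1}\bra{\alpha,\beta}P_{\cU}\ket{\alpha,\beta}=\min_{X|Y}\epsilon_{X|Y},
\end{equation*}
where the outer minimum runs over the finitely many nontrivial bipartitions.

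Next we show that each $\epsilon_{X|Y}$ is strictly positive. Suppose instead that $\bra{\alpha,\beta}P_{\cU}\ket{\alpha,\beta}=0$ for some normalized $\ket\alpha_X,\ket\beta_Y$. Since $P_{\cU}$ is an orthogonal projector, this says $\norm{P_{\cU}\ket{\alpha,\beta}}^2=0$. So $\ket{\alpha}\otimes\ket{\beta}$ lies in $\ran\Pi_{\cG}=\cG_{\cC}=\Span(\cU_{\cC})^\perp$, using Proposition~\ref{prop:supp-complement}. This gives a nonzero $X|Y$ product vector orthogonal to $\cU_{\cC}$, contradicting Theorem~\ref{thm:gupb-existence}. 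Therefore each inner minimum, which is attained by compactness, is positive. A minimum over finitely many positive numbers is positive, so $\epsilon_G>0$.

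The only point that needs care is the reduction from mixed biseparable states to pure product vectors, that is, attainment at extreme points. If one prefers to avoid talking about extreme points, one can argue directly. Write any $\sigma\in\BSEP$ as a finite convex combination $\sum_i\lambda_i\proj{\alpha_i,\beta_i}$ of pure states, each product across some bipartition; Carathéodory's theorem allows this in finite dimension. Then
\begin{equation*}
\Tr(P_{\cU}\sigma)=\sum_i\lambda_i\bra{\alpha_i,\beta_i}P_{\cU}\ket{\alpha_i,\beta_i}\geq\min_{X|Y}\epsilon_{X|Y}.
\end{equation*}
This lower bound holds whether or not the infimum over $\BSEP$ is attained, and it already gives $\epsilon_G\geq\min_{X|Y}\epsilon_{X|Y}>0$.
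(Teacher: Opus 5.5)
Your proposal is correct and follows essentially the same route as the paper: compactness of the normalized product vectors for each of the finitely many bipartitions, strict positivity of $\bra{\phi}P_{\cU}\ket{\phi}$ on them via Theorem~\ref{thm:gupb-existence}, and reduction of the linear minimization over $\BSEP$ to these generating pure states. Your Carath\'eodory variant just spells out the reduction step that the paper states in one line.
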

	
	\begin{proof}
		For each nontrivial bipartition $X|Y$, let $\mathcal P_{X|Y}$ be the compact
		set of normalized $X|Y$ product vectors.  Their finite union
		$\mathcal P_{\rm bi}=\bigcup_{X|Y}\mathcal P_{X|Y}$ is compact.  The
		continuous function $f(\phi)=\bra\phi P_{\cU}\ket\phi$ has no zero on
		$\mathcal P_{\rm bi}$ by Theorem~\ref{thm:gupb-existence}; hence it has a
		strictly positive minimum there.  The biseparable state set is the convex
		hull of the corresponding rank-one projectors, and
		$\Tr(P_{\cU}\,\cdot)$ is linear.  Its minimum over the convex hull therefore
		equals the minimum over the generating product states, proving
		$\epsilon_G>0$.
	\end{proof}

	\begin{proposition}[Completion of Theorem~\ref{thm:ppt-gme-witness}]
		\label{prop:supp-ppt-gme-family}
		All statements of Theorem~\ref{thm:ppt-gme-witness} hold.
	\end{proposition}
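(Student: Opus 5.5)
We have to prove every assertion of Theorem~\ref{thm:ppt-gme-witness} in turn, using Propositions~\ref{prop:supp-complement}, \ref{prop:supp-pt} and \ref{prop:supp-positive-gap}.

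\emph{Positivity and partial-transpose invariance of $\rho_\lambda$.} Since $\Pi_{\cG}$ and $P_{\cU}=I-\Pi_{\cG}$ are complementary orthogonal projectors, for $0<\lambda\le1$ the operator $\Pi_{\cG}+\lambda P_{\cU}$ has spectrum $\{1,\lambda\}$. It is therefore strictly positive. Its trace is $R+\lambda K$ because $\Tr\Pi_{\cG}=R$ and $\Tr P_{\cU}=D-R=K$. Proposition~\ref{prop:supp-pt} gives $\Pi_{\cG}^{T_X}=\Pi_{\cG}$ and $P_{\cU}^{T_X}=P_{\cU}$. By linearity of $T_X$ we get $\rho_\lambda^{T_X}=\rho_\lambda>0$ for every $X$.

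\emph{Witness property and detection interval.} For any biseparable $\sigma$, the definition of $\epsilon_G$ gives $\Tr(W_G\sigma)=\Tr(P_{\cU}\sigma)-\epsilon_G\ge0$, so $W_G$ is a GME witness. By Proposition~\ref{prop:supp-positive-gap}, $\epsilon_G>0$, hence $\Tr(W_G\rho_{\cG})=-\epsilon_G<0$ and the witness is nontrivial. Using $\Pi_{\cG}P_{\cU}=0$, one computes $\Tr(P_{\cU}\rho_\lambda)=\lambda K/(R+\lambda K)$, which gives Eq.~\eqref{eq:strict-ppt-witness-value}. Note $\epsilon_G<1$: otherwise $P_{\cU}\ge$ every biseparable expectation would force $\Tr(\Pi_{\cG}\sigma)=0$ for all product $\sigma$. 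That is impossible, since the fully product vector $\ket{\psi_{\mathbf t}}-$ component has nonzero overlap; concretely $\ket{e_{\mathbf t}}$ has $\bra{e_{\mathbf t}}\Pi_{\cG}\ket{e_{\mathbf t}}=1-p^{-N}>0$. The inequality $\lambda K/(R+\lambda K)<\epsilon_G$ then rearranges to $\lambda K(1-\epsilon_G)<R\epsilon_G$, i.e.\ $\lambda<\kappa_G$. A negative witness value certifies that $\rho_\lambda$ is not biseparable, hence GME.

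\emph{Nondecomposability and failure of fully decomposable witnesses.} Suppose $W_G=A_X+B_X^{T_X}$ with $A_X,B_X\ge0$. Then
\[
\Tr(W_G\rho_{\cG})=\Tr(A_X\rho_{\cG})+\Tr(B_X\rho_{\cG}^{T_X})\ge0,
\]
because $\rho_{\cG}^{T_X}=\rho_{\cG}\ge0$. This contradicts the value $-\epsilon_G$.

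For the last claim I would use the Jungnitsch--Moroder--G\"uhne characterization. A fully decomposable witness $W$ is one that admits, for every bipartition $X|Y$, a decomposition $W=A_X+B_X^{T_X}$ with $A_X,B_X\ge0$. For such $W$ and any state $\rho$ that is PPT across some bipartition $X|Y$, the same computation gives $\Tr(W\rho)\ge0$. Applying this to $\rho_\lambda$, which is PPT across all bipartitions, shows that no fully decomposable witness detects any member of the family.

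\emph{Main obstacle.} The only delicate point is the convention-dependent definition of ``fully decomposable.'' I will state it explicitly as above, so that the one-line positivity argument applies directly without needing the full PPT-mixture duality theorem. I will also check $\epsilon_G<1$ so that $\kappa_G$ is finite and positive.
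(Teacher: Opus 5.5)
Your proposal is correct and follows the paper's proof essentially step by step: positivity and partial-transpose invariance of $\rho_\lambda$ from its spectrum and Proposition~\ref{prop:supp-pt}, the witness value computed via $P_{\cU}\Pi_{\cG}=0$, and the decomposability contradiction on the PPT state $\rho_{\cG}$ (and on each PPT $\rho_\lambda$ for fully decomposable witnesses). Your explicit check that $\epsilon_G\le\bra{e_{\mathbf t}}P_{\cU}\ket{e_{\mathbf t}}=p^{-N}<1$ is a useful addition, since the paper relies on $\epsilon_G<1$ to rearrange the detection inequality but never justifies it.
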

	
	\begin{proof}
		Proposition~\ref{prop:supp-positive-gap} shows that
		$W_G=P_{\cU}-\epsilon_G I$ is nonnegative on every biseparable state.
		Because $P_{\cU}\Pi_{\cG}=0$ and $\Tr\rho_{\cG}=1$,
		\begin{equation}
			\Tr(W_G\rho_{\cG})=-\epsilon_G<0.
			\label{eq:supp-witness-on-rhog}
		\end{equation}
		Thus $W_G$ is a GME witness detecting $\rho_{\cG}$.
		
		Fix a nontrivial bipartition $X|Y$.  If $W_G$ were decomposable with
		respect to this bipartition, $W_G=A_X+B_X^{T_X}$ with
		$A_X,B_X\geq0$.  Proposition~\ref{prop:supp-pt} would then give
		\begin{equation}
			\Tr(W_G\rho_{\cG})
			=
			\Tr(A_X\rho_{\cG})+
			\Tr(B_X\rho_{\cG}^{T_X})
			\geq0,
		\end{equation}
		contradicting Eq.~\eqref{eq:supp-witness-on-rhog}.  Since the bipartition
		was arbitrary, $W_G$ is nondecomposable with respect to every bipartition.
		
		Now consider
		\begin{equation}
			\rho_\lambda
			=
			\frac{\Pi_{\cG}+\lambda P_{\cU}}{R+\lambda K},
			\qquad 0<\lambda\leq1.
		\end{equation}
		Because $\Pi_{\cG}$ and $P_{\cU}$ are complementary orthogonal projectors,
		$\rho_\lambda$ has strictly positive eigenvalues
		$(R+\lambda K)^{-1}$ on $\cG_{\cC}$ and
		$\lambda(R+\lambda K)^{-1}$ on $\Span(\cU_{\cC})$.  Hence it is full rank.
		Proposition~\ref{prop:supp-pt} gives
		$\rho_\lambda^{T_X}=\rho_\lambda>0$ for every bipartition.  Moreover,
		\begin{equation}
			\Tr(W_G\rho_\lambda)
			=
			\frac{\lambda K}{R+\lambda K}-\epsilon_G.
			\label{eq:supp-rholambda-witness}
		\end{equation}
		Since  
		$\epsilon_G <1$, this expectation is negative exactly when
		\begin{equation}
			0<\lambda<\frac{R\epsilon_G}{K(1-\epsilon_G)}=\kappa_G.
		\end{equation}
		Thus every state in the stated interval is GME.
		
		Finally, if $W_{\rm fd}$ is fully decomposable, then for every bipartition
		it admits $W_{\rm fd}=A_X+B_X^{T_X}$ with $A_X,B_X\geq0$.  Since every
		$\rho_\lambda$ is PPT with respect to every bipartition,
		\begin{equation}
			\Tr(W_{\rm fd}\rho_\lambda)
			=
			\Tr(A_X\rho_\lambda)+
			\Tr(B_X\rho_\lambda^{T_X})
			\geq0.
		\end{equation}
		Hence no fully decomposable witness detects any member of this family,
		whereas $W_G$ detects precisely the interval stated in the theorem.
	\end{proof}

	\section{Measurement, one-copy discrimination}
	\label{sec:supp-measurement}

	The first measurement statement is a cone separation for the complementary
	projector itself.
	
	\begin{proposition}[PPT across every bipartition but outside the biseparable cone]
		\label{prop:supp-ppt-bsep-separation}
		The complementary projector satisfies
		\begin{equation}
			\Pi_{\cG}
			\in
			\left(\bigcap_{X|Y}\PPT_{X|Y}^{+}\right)
			\setminus\BSEP^{+},
			\label{eq:supp-ppt-bsep-separation}
		\end{equation}
		where $\PPT_{X|Y}^{+}$ is the cone of positive operators with positive
		partial transpose across $X|Y$.
	\end{proposition}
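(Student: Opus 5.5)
The proof splits into the membership claim and the exclusion claim.

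For membership in every $\PPT_{X|Y}^{+}$, note that $\Pi_{\cG}$ is an orthogonal projector, hence positive. By Proposition~\ref{prop:supp-pt}, $\Pi_{\cG}^{T_X}=\Pi_{\cG}\geq0$ for every subset $X$. So $\Pi_{\cG}$ lies in $\PPT_{X|Y}^{+}$ for each nontrivial bipartition, and therefore in their intersection. This part is immediate.

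For $\Pi_{\cG}\notin\BSEP^{+}$, argue by contradiction. Suppose
\[
\Pi_{\cG}=\sum_{X|Y}\sigma_{X|Y},
\]
where each $\sigma_{X|Y}$ is a positive operator that is separable across $X|Y$. Since $\Pi_{\cG}\neq0$ (it has rank $R=p^N-1\geq1$ by Proposition~\ref{prop:supp-complement}), some summand $\sigma_{X|Y}$ is nonzero. Being separable, it can be written as a finite positive combination
\[
\sigma_{X|Y}=\sum_i \lambda_i\proj{\alpha_i}\otimes\proj{\beta_i}, \qquad \lambda_i>0 .
\]
Any nonzero term yields an $X|Y$ product vector $\ket{\alpha_i}\otimes\ket{\beta_i}$.

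The key step is to show this product vector lies in $\cG_{\cC}$. Every summand satisfies $0\leq\sigma_{X|Y}\leq\Pi_{\cG}$, so $\ran\sigma_{X|Y}\subseteq\ran\Pi_{\cG}=\cG_{\cC}$. Concretely, for $v\perp\cG_{\cC}$ we have
\[
0\leq\bra v\sigma_{X|Y}\ket v\leq\bra v\Pi_{\cG}\ket v=0,
\]
which forces $\sigma_{X|Y}\ket v=0$. Applying the same inequality to each term $\lambda_i\proj{\alpha_i}\otimes\proj{\beta_i}\leq\sigma_{X|Y}$ gives $\ket{\alpha_i}\otimes\ket{\beta_i}\in\cG_{\cC}=\Span(\cU_{\cC})^\perp$.

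This contradicts Theorem~\ref{thm:gupb-existence}, which excludes any nonzero $X|Y$ product vector from $\cU_{\cC}^\perp$.

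The only mildly delicate points are two. First, one must fix the convention for $\BSEP^{+}$ as the cone generated by operators separable across some bipartition, which allows a finite decomposition by Carathéodory. Second, one must justify the range inclusion from operator domination; the argument above handles it. No further obstacle is expected.
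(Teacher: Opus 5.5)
Your proof is correct and follows the same route as the paper. Membership in every PPT cone comes from the exact invariance $\Pi_{\cG}^{T_X}=\Pi_{\cG}\geq0$. Exclusion from $\BSEP^{+}$ comes from noting that any summand $0\leq\sigma\leq\Pi_{\cG}$ of a putative biseparable decomposition has range inside $\cG_{\cC}$, so a product vector from its separable decomposition would lie in $\cG_{\cC}$, contradicting Theorem~\ref{thm:gupb-existence}. The only cosmetic difference is that you apply the operator domination directly to each rank-one term, whereas the paper uses that the range of a positive sum is the span of its summand vectors; both are equally valid.
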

	
	\begin{proof}
		Proposition~\ref{prop:supp-pt} gives
		$\Pi_{\cG}^{T_X}=\Pi_{\cG}\geq0$ for every bipartition, proving membership in all
		the PPT cones. Suppose instead that
		$\Pi_{\cG}=\sum_{\mu=1}^m X_\mu$, where every nonzero $X_\mu\geq0$ is
		separable across some bipartition that may depend on $\mu$. For a fixed $\mu$,
		$0\leq X_\mu\leq\Pi_{\cG}$. If $\ket z\in\ker\Pi_{\cG}$, then
		\begin{equation}
			0
			\leq
			\bra zX_\mu\ket z
			\leq
			\bra z\Pi_{\cG}\ket z
			=0.
			\label{eq:supp-kernel-order}
		\end{equation}
		Positivity gives $X_\mu^{1/2}\ket z=0$, so
		$\ker\Pi_{\cG}\subseteq\ker X_\mu$ and therefore
		$\ran X_\mu\subseteq\ran\Pi_{\cG}=\cG_{\cC}$.
		
		Choose a nonzero summand $X_\mu$ and write one of its finite separable
		decompositions as $X_\mu=\sum_s\omega_s\proj{a_s,b_s}$ with
		$\omega_s>0$. Its range is the span of the product vectors
		$\ket{a_s,b_s}$, so at least one nonzero product vector for that bipartition belongs to
		$\cG_{\cC}$. This contradicts Theorem~\ref{thm:gupb-existence}. Hence no such
		biseparable decomposition of $\Pi_{\cG}$ exists.
	\end{proof}
	
	\subsection{One-copy separable discrimination}
	
	We next prove the following criterion and then the quantitative sharp
	tradeoff used in Eq.~\eqref{eq:one-copy-biseparable-bound}.
	
	\begin{proposition}[One-sided unambiguous discrimination criterion]
		\label{prop:supp-one-sided-discrimination}
		Fix a nontrivial bipartition $X|Y$. There exists a nonzero $X|Y$-separable effect
		$E$ such that
		\begin{equation}
			\Tr(E\rho_{\cU})=0,
			\qquad
			\Tr(E\rho_{\cG})>0
			\label{eq:supp-one-sided-error-free}
		\end{equation}
		if and only if $\cG_{\cC}$ contains an $X|Y$ product vector.
	\end{proposition}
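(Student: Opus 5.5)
We prove the two directions separately. Throughout, we use that $\rho_{\cU}=P_{\cU}/K$ and $\rho_{\cG}=\Pi_{\cG}/R$ have complementary supports, $\Span(\cU_{\cC})$ and $\cG_{\cC}$ respectively. We also use that an $X|Y$-separable effect is, by definition, a finite positive combination $E=\sum_s\omega_s\proj{a_s,b_s}$ of $X|Y$ product projectors with $\omega_s>0$ and $0\leq E\leq I$.

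For the ``if'' direction, suppose $\ket{a,b}\in\cG_{\cC}$ is a nonzero $X|Y$ product vector, normalized. Take $E=\proj{a,b}$. This is a rank-one product projector, so it is $X|Y$-separable, and $0\leq E\leq I$.
\begin{itemize}
\item Since $P_{\cU}\ket{a,b}=0$, we get $\Tr(E\rho_{\cU})=\bra{a,b}P_{\cU}\ket{a,b}/K=0$.
\item Since $\Pi_{\cG}\ket{a,b}=\ket{a,b}$, we get $\Tr(E\rho_{\cG})=1/R>0$.
\end{itemize}
This direction is immediate.

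For the ``only if'' direction, let $E=\sum_s\omega_s\proj{a_s,b_s}$ be nonzero and $X|Y$-separable with the two properties in Eq.~\eqref{eq:supp-one-sided-error-free}.
\begin{itemize}
\item From $\Tr(E\rho_{\cU})=0$ we get $\sum_s\omega_s\bra{a_s,b_s}P_{\cU}\ket{a_s,b_s}=0$.
\item Every term in this sum is nonnegative and every $\omega_s>0$, so $P_{\cU}\ket{a_s,b_s}=0$ for each $s$.
\item Hence each $\ket{a_s,b_s}$ lies in $\ker P_{\cU}=\cG_{\cC}$.
\end{itemize}
Since $E\neq0$, at least one term is present. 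Therefore $\cG_{\cC}$ contains a nonzero $X|Y$ product vector. The hypothesis $\Tr(E\rho_{\cG})>0$ is not needed here, although it holds automatically because such vectors lie in $\ran\Pi_{\cG}$.

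The only point that needs care is the definition of a separable effect. If separability is defined through closure of the cone rather than finite decompositions, we invoke Carathéodory's theorem in the finite-dimensional real space of Hermitian operators. This shows that every separable positive operator is a finite positive sum of product projectors. The argument above then applies verbatim. Alternatively, we note $\ran E\subseteq\ker P_{\cU}$ and use that the range of a separable operator is spanned by product vectors, as in the proof of Proposition~\ref{prop:supp-ppt-bsep-separation}.

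Combined with Theorem~\ref{thm:gupb-existence}, the criterion shows that no such one-sided error-free separable effect exists for any bipartition. This is the qualitative version of Eq.~\eqref{eq:one-copy-biseparable-bound}.
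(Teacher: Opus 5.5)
Your proof is correct and follows essentially the same route as the paper. The ``if'' direction uses the same rank-one product effect. In the ``only if'' direction, you apply the vanishing of $\Tr(EP_{\cU})$ term by term to the separable decomposition, whereas the paper first derives $\ran E\subseteq\cG_{\cC}$ from $\norm{P_{\cU}E^{1/2}}_{\mathrm F}=0$ and then uses that this range is spanned by the product vectors of the decomposition. That is only a cosmetic difference, and you already mention it yourself as an alternative.
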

	
	\begin{proof}
		If $\ket{a,b}\in\cG_{\cC}$ is nonzero, then
		set $\ket{\widehat a}=\ket a/\norm a$ and
		$\ket{\widehat b}=\ket b/\norm b$. Then
		$E=\proj{\widehat a,\widehat b}$ is a rank-one product effect with zero
		overlap with $P_{\cU}$ and positive overlap with $\Pi_{\cG}$. Conversely, let
		$E\geq0$ be a nonzero separable effect with $\Tr(EP_{\cU})=0$. Since
		$P_{\cU}$ is an orthogonal projector,
		\begin{equation}
			0
			=
			\Tr(E^{1/2}P_{\cU}E^{1/2})
			=
			\norm{P_{\cU}E^{1/2}}_{\mathrm F}^{2},
			\label{eq:supp-zero-trace-range}
		\end{equation}
		so $P_{\cU}E^{1/2}=0$. The positive operators $E$ and $E^{1/2}$ have the
		same range, hence $\ran E\subseteq\ker P_{\cU}=\cG_{\cC}$. By separability,
		$E=\sum_s\omega_s\proj{a_s,b_s}$ for nonzero product vectors and positive
		coefficients. The range of this positive sum is
		$\Span\{\ket{a_s,b_s}\}_s$; consequently every summand vector, and in
		particular at least one nonzero product vector, belongs to
		$\ran E\subseteq\cG_{\cC}$. The condition
		$\Tr(E\rho_{\cG})>0$ follows automatically from $E\neq0$ and
		$\ran E\subseteq\ran\Pi_{\cG}$.
	\end{proof}
	
	\begin{proof}[Proof of the one-copy tradeoff Eq.\eqref{eq:one-copy-biseparable-bound}]
		Any biseparable positive operator can be written as a finite conic
		combination
		\begin{equation}
			E
			=
			\sum_s\omega_s\proj{\phi_s},
			\qquad
			\omega_s>0,
			\label{eq:supp-bsep-effect-decomposition}
		\end{equation}
		where each normalized $\ket{\phi_s}$ is product with respect to some bipartition
		that may depend on $s$. Put
		\begin{equation}
			p_s
			=
			\bra{\phi_s}P_{\cU}\ket{\phi_s}.
			\label{eq:supp-ps}
		\end{equation}
		Proposition~\ref{prop:supp-positive-gap} gives
		$p_s\geq\epsilon_G$, while
		\begin{equation}
			\bra{\phi_s}\Pi_{\cG}\ket{\phi_s}
			=
			1-p_s.
			\label{eq:supp-ds}
		\end{equation}
		Because $\frac{x}{1-x}$ is increasing on $[0,1)$,
		\begin{equation}
			\frac{p_s}{1-p_s}
			\geq
			\frac{\epsilon_G}{1-\epsilon_G}
			\label{eq:supp-ratio-gap}
		\end{equation}
		whenever $p_s<1$; if $p_s=1$ the desired inequality is trivial (i.e., the equality \eqref{eq:supp-termwise-onecopy} ).
		Multiplying Eq.~\eqref{eq:supp-ratio-gap} by
		$\omega_s(1-p_s)/K$ gives
		\begin{equation}
			\omega_s\frac{p_s}{K}
			\geq
			\frac{R\epsilon_G}{K(1-\epsilon_G)}
			\omega_s\frac{1-p_s}{R}.
			\label{eq:supp-termwise-onecopy}
		\end{equation}
		Summing over $s$ yields Eq.~\eqref{eq:one-copy-biseparable-bound}.
		
		To prove optimality, compactness in
		Proposition~\ref{prop:supp-positive-gap} supplies a normalized vector that is product with respect to some bipartition $\ket{\phi_*}$ with
		$\bra{\phi_*}P_{\cU}\ket{\phi_*}=\epsilon_G$. The upper bound
		$\epsilon_G <1$  implies that its overlap with the complementary subspace is nonzero. The effect $E_*=\proj{\phi_*}$ then satisfies
		\begin{equation}
			\frac{\Tr(E_*\rho_{\cU})}{\Tr(E_*\rho_{\cG})}
			=
			\frac{R\epsilon_G}{K(1-\epsilon_G)}
			=
			\kappa_G,
			\label{eq:supp-one-copy-attainment}
		\end{equation}
		so the constant cannot be increased.
	\end{proof}

	\section{Proof of Theorem~\ref{thm:tensor-separation}}
	\label{sec:supp-tensor}

	The tensor statement in the Letter uses only the standard bipartite closure
	of UPBs (see also Ref. \cite{DiVincenzo2003}).  We include a proof to make the multipartite application
	self-contained.
	
	\begin{lemma}[Tensor product of bipartite UPBs]
		\label{lem:supp-tensor-upb}
		Let
		$\{\ket{a_i}\otimes\ket{b_i}\}_{i=1}^{K_1}$ be a bipartite UPB in
		$\cH_A\otimes\cH_B$, and let
		$\{\ket{c_j}\otimes\ket{d_j}\}_{j=1}^{K_2}$ be a bipartite UPB in
		$\cH_C\otimes\cH_D$. Then
		\begin{equation}
			\left\{
			(\ket{a_i}\otimes\ket{c_j})_{AC}
			\otimes
			(\ket{b_i}\otimes\ket{d_j})_{BD}
			\right\}_{i,j}
			\label{eq:supp-tensor-upb-set}
		\end{equation}
		is a UPB with respect to $AC|BD$.
	\end{lemma}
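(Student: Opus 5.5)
The plan is to follow the standard DiVincenzo et al.\ argument. Orthogonality is immediate: two vectors labeled $(i,j)$ and $(i',j')$ have inner product $\braket{a_i}{a_{i'}}\braket{b_i}{b_{i'}}\braket{c_j}{c_{j'}}\braket{d_j}{d_{j'}}$. If $i\neq i'$, the first two factors multiply to zero because the original family is orthogonal, and the same holds for $j\neq j'$. Incompleteness is also clear: the span of the tensor family is $\Span\{\ket{a_ib_i}\}\otimes\Span\{\ket{c_jd_j}\}$, whose dimension $K_1K_2$ is strictly less than $\dim\cH_A\dim\cH_B\dim\cH_C\dim\cH_D$. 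So the substance is unextendibility across $AC|BD$.

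Suppose $\ket{\alpha}_{AC}\otimes\ket{\beta}_{BD}$ is a nonzero vector orthogonal to every member of the family. Decompose $\alpha$ in an orthonormal basis $\{\ket{\gamma_m}\}$ of $\cH_C$ as $\ket\alpha=\sum_m\ket{\alpha_m}_A\ket{\gamma_m}_C$. The expansion for $\beta$ in a basis $\{\ket{\delta_n}\}$ of $\cH_D$ is analogous, $\ket\beta=\sum_n\ket{\beta_n}_B\ket{\delta_n}_D$.

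The key step is a partial contraction. For each fixed $j$, form the $AB$ vector
\begin{equation}
\ket{\Phi_j}=\bigl(I_A\otimes\bra{c_j}_C\bigr)\ket\alpha\otimes\bigl(I_B\otimes\bra{d_j}_D\bigr)\ket\beta=:\ket{\tilde a_j}_A\ket{\tilde b_j}_B.
\end{equation}
This is an $A|B$ product vector. Orthogonality to all family members with this $j$ gives $\braket{a_ib_i}{\Phi_j}=0$ for every $i$. Unextendibility of the first UPB then forces $\ket{\Phi_j}=0$, so for each $j$ either $(I\otimes\bra{c_j})\ket\alpha=0$ or $(I\otimes\bra{d_j})\ket\beta=0$.

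Next, symmetrize over the $A,B$ side. Choose any $\ket{u}_A,\ket{v}_B$, and define $\ket{\alpha_u}=(\bra u\otimes I_C)\ket\alpha\in\cH_C$ and $\ket{\beta_v}=(\bra v\otimes I_D)\ket\beta\in\cH_D$. The dichotomy above implies $\braket{c_j}{\alpha_u}\braket{d_j}{\beta_v}=0$ for every $j$, i.e.\ $\ket{\alpha_u}\otimes\ket{\beta_v}$ is a $C|D$ product vector orthogonal to the second UPB. Hence $\ket{\alpha_u}\otimes\ket{\beta_v}=0$ for all choices of $u,v$.

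To conclude, note that since $\alpha\neq0$, some $u$ gives $\alpha_u\neq0$, and likewise some $v$ gives $\beta_v\neq0$. Their product is then nonzero, a contradiction. Therefore no product vector across $AC|BD$ is orthogonal to the family.

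The main subtlety is that one cannot directly apply the second UPB with $j$ fixed. The dichotomy obtained for each $j$ must be converted into a single product vector on $C|D$, and contracting with arbitrary $\bra u,\bra v$ is what decouples the two sides. I expect this bookkeeping to be the only real obstacle; the rest is routine. For the theorem itself, the plan is to apply the lemma inductively, once for each bipartition $X|Y$ of the grouped parties, using Theorem~\ref{thm:gupb-existence} for the base case.
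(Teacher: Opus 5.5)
Your proof is correct and is essentially the paper's argument with the roles of the two UPBs interchanged. The paper contracts with $\bra{a_i}$ and $\bra{b_i}$, uses unextendibility of the second UPB to get ``$\ket{x_i}=0$ or $\ket{y_i}=0$'' for each $i$, and then takes nonzero vectors in the reduced supports $X_A,Y_B$ (via a Schmidt decomposition) to contradict the first UPB. You contract with $\bra{c_j},\bra{d_j}$ first, and your contractions $\ket{\alpha_u},\ket{\beta_v}$, which automatically lie in the reduced supports on $C$ and $D$, replace the Schmidt step through the identity $\braket{c_j}{\alpha_u}=\braket{u}{\tilde a_j}$. The basis expansions of $\alpha$ and $\beta$ in your second paragraph are never used and can be dropped.
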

	
	\begin{proof}
		For two distinct pairs $(i,j)\neq(i',j')$, either $i\neq i'$ or
		$j\neq j'$. In the first case the inner product contains the zero factor
		$\braket{a_i}{a_{i'}}\braket{b_i}{b_{i'}}$; in the second it contains the zero
		factor $\braket{c_j}{c_{j'}}\braket{d_j}{d_{j'}}$. Thus the displayed family is
		orthogonal. Moreover,
		\begin{equation}
			K_1K_2
			<
			(\dim\cH_A\dim\cH_B)(\dim\cH_C\dim\cH_D),
			\label{eq:supp-tensor-incomplete-count}
		\end{equation}
		because both input UPBs are incomplete, so the product family is
		incomplete. Suppose, for contradiction, that a nonzero product vector
		$\ket x_{AC}\otimes\ket y_{BD}$ is orthogonal to every vector in
		Eq.~\eqref{eq:supp-tensor-upb-set}. For every $i$, define contractions
		\begin{equation}
			\ket{x_i}_C
			=
			(\bra{a_i}\otimes I_C)\ket x,
			\qquad
			\ket{y_i}_D
			=
			(\bra{b_i}\otimes I_D)\ket y.
			\label{eq:supp-contractions}
		\end{equation}
		For every $j$,
		\begin{equation}
			\braket{c_j}{x_i}
			\braket{d_j}{y_i}
			=
			0.
			\label{eq:supp-contracted-orthogonality}
		\end{equation}
		If both $\ket{x_i}$ and $\ket{y_i}$ were nonzero, then
		$\ket{x_i}\otimes\ket{y_i}$ would be a nonzero product vector orthogonal to
		the entire second UPB, impossible. Hence for every $i$,
		\begin{equation}
			\ket{x_i}=0
			\qquad\text{or}\qquad
			\ket{y_i}=0.
			\label{eq:supp-contraction-zero}
		\end{equation}
		
		Let
		$X_A=\ran[\Tr_C(\proj{x})]$ be the support of the reduced positive
		operator of $\ket x$ on $A$, and let
		$Y_B=\ran[\Tr_D(\proj{y})]$ be the analogous support on $B$; here
		$\Tr_C$ and $\Tr_D$ denote partial traces. To verify the first equivalence
		below, write a Schmidt decomposition
		$\ket x=\sum_r s_r\ket{u_r}_A\ket{v_r}_C$ with every $s_r>0$. Then
		$\ket{x_i}=\sum_r s_r\braket{a_i}{u_r}\ket{v_r}$ vanishes exactly when
		$\ket{a_i}$ is orthogonal to every $\ket{u_r}$, that is, to $X_A$.
		The same reasoning on $B|D$ gives
		\begin{equation}
			\ket{x_i}=0
			\Longleftrightarrow
			\ket{a_i}\perp X_A,
			\qquad
			\ket{y_i}=0
			\Longleftrightarrow
			\ket{b_i}\perp Y_B.
			\label{eq:supp-schmidt-support}
		\end{equation}
		Because $\ket x$ and $\ket y$ are nonzero, both $X_A$ and $Y_B$ are
		nonzero. Choose nonzero
		$\ket\alpha\in X_A$ and $\ket\beta\in Y_B$. For every $i$,
		Eq.~\eqref{eq:supp-contraction-zero} together with
		Eq.~\eqref{eq:supp-schmidt-support} gives
		\begin{equation}
			\braket{a_i}{\alpha}
			\braket{b_i}{\beta}
			=
			0.
			\label{eq:supp-first-upb-extension}
		\end{equation}
		Thus $\ket\alpha\otimes\ket\beta$ is a nonzero product vector orthogonal to
		the first UPB, contradiction.
	\end{proof}
	
	\begin{corollary}[Tensor-power stability]
		\label{cor:supp-partywise-stability}
		If $\cU$ and $\cV$ are GUPBs on the same set of parties, their tensor product,
		with tensor factors grouped according to the original parties,
		\begin{equation}
			\cU\boxtimes\cV
			=
			\left\{
			\bigotimes_{P\in\Z_N}
			(\ket{u_P}\otimes\ket{v_P}):
			\bigotimes_P\ket{u_P}\in\cU,
			\ \bigotimes_P\ket{v_P}\in\cV
			\right\}
			\label{eq:supp-partywise-product-definition}
		\end{equation}
		is a GUPB. In particular,
		$\cU_{\cC}^{\otimes\ell}$ is a GUPB for every $\ell\geq1$.
	\end{corollary}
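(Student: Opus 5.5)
The plan is to reduce the corollary to Lemma~\ref{lem:supp-tensor-upb}, applied once for each nontrivial bipartition. We will handle three things in turn: fully product form, orthogonality and incompleteness, and unextendibility across every bipartition.

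First, the basic properties of $\cU\boxtimes\cV$.
\begin{itemize}
\item \emph{Fully product.} Each element $\bigotimes_P(\ket{u_P}\otimes\ket{v_P})$ is fully product with respect to the enlarged parties $\cH_P\otimes\cH'_P$.
\item \emph{Orthogonality.} For two distinct pairs, the inner product factorizes as $\braket{u}{u'}\braket{v}{v'}$. At least one factor vanishes because $\cU$ and $\cV$ are each orthogonal families.
\item \emph{Incompleteness.} We have $|\cU||\cV|<\dim\cH\cdot\dim\cH'$, since both families are incomplete.
\end{itemize}

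Next, fix a nontrivial bipartition $X|Y$ of $\Z_N$. Because $\cU$ is a GUPB, the grouped family $\{\ket{u_X}\otimes\ket{u_Y}\}$ is an incomplete orthogonal set of $X|Y$ product vectors. Its complement contains no $X|Y$ product vector, so it is a bipartite UPB in $\cH_X\otimes\cH_Y$. In the same way, $\cV$ gives a bipartite UPB in $\cH'_X\otimes\cH'_Y$.

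Apply Lemma~\ref{lem:supp-tensor-upb} with $A=\cH_X$, $B=\cH_Y$, $C=\cH'_X$, $D=\cH'_Y$. It shows that the product family is a UPB with respect to $AC|BD$. The only point needing care is to identify $AC=\bigotimes_{P\in X}(\cH_P\otimes\cH'_P)$ and $BD=\bigotimes_{P\in Y}(\cH_P\otimes\cH'_P)$ via the canonical reordering of tensor factors. This reordering is a unitary that is a permutation of tensor factors within each side, so it preserves both product-ness across the cut and orthogonality. Hence no nonzero $X|Y$ product vector of the grouped space is orthogonal to $\cU\boxtimes\cV$. Since $X|Y$ was arbitrary, $\cU\boxtimes\cV$ is a GUPB.

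Finally, the statement $\cU_{\cC}^{\otimes\ell}=\cU_{\cC}\boxtimes\cU_{\cC}^{\otimes(\ell-1)}$ follows by induction on $\ell$. The base case $\ell=1$ is Theorem~\ref{thm:gupb-existence}. There is no real obstacle here; the only care needed is the bookkeeping of the factor regrouping.
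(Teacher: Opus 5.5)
Your proposal is correct and follows the paper's proof: fix an arbitrary bipartition $X|Y$, view $\cU$ and $\cV$ as bipartite UPBs across it, apply Lemma~\ref{lem:supp-tensor-upb} with the factors regrouped partywise, and induct on $\ell$ for the tensor powers. Your separate orthogonality and incompleteness checks are already supplied by the lemma, while your remarks on full productness and on the regrouping being a local permutation unitary make explicit what the paper leaves implicit.
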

	
	\begin{proof}
		Fix any bipartition $X|Y$. After grouping all systems in $X$ and all systems in
		$Y$, both $\cU$ and $\cV$ are bipartite UPBs with respect to that bipartition. Apply
		Lemma~\ref{lem:supp-tensor-upb}, with the tensor factors regrouped according to the original parties. The product family is therefore a bipartite UPB with respect to this arbitrary bipartition. Hence it is a GUPB.
	\end{proof}

	We now prove positivity, optimality, and the upper bound for
	$\kappa_{\ell,X|Y}$.
	
	\begin{proposition}[Finite-copy discrimination bound]
		\label{prop:supp-finite-copy-slope}
		For every $\ell$ and bipartition $X|Y$, the quantity
		$\kappa_{\ell,X|Y}$ in Eq.~\eqref{eq:tensor-slope-definition} satisfies
		Eq.~\eqref{eq:kappa-range}; every $X|Y$-separable effect satisfies
		Eq.~\eqref{eq:tensor-slope}; and equality is attainable by a scaled
		rank-one product effect.
	\end{proposition}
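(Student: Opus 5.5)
The plan mirrors the one-copy tradeoff proof, applied to the $\ell$-fold tensor power. Three facts come first.

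First, $P_{\cU}^{\otimes\ell}$ and $\Pi_{\cG}^{\otimes\ell}$ are orthogonal projectors with traces $K^\ell$ and $R^\ell$. Hence
\[
\rho_{\cU}^{\otimes\ell}=P_{\cU}^{\otimes\ell}/K^\ell,
\qquad
\rho_{\cG}^{\otimes\ell}=\Pi_{\cG}^{\otimes\ell}/R^\ell,
\]
and for a normalized $X|Y$ product vector $\ket{A,B}$ the ratio of the two state expectations is exactly $(R/K)^\ell\,p_{\cU}^{(\ell)}/p_{\cG}^{(\ell)}$.

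Second, the minimum defining $\kappa_{\ell,X|Y}$ is attained and is positive. By Corollary~\ref{cor:supp-partywise-stability}, $\cU_{\cC}^{\otimes\ell}$ is a GUPB. Its span contains the range of $P_{\cU}^{\otimes\ell}$, since every product of basis vectors lies in it. So no nonzero $X|Y$ product vector lies in $\ker P_{\cU}^{\otimes\ell}$. By compactness of normalized product vectors, $\epsilon_\ell:=\min\bra{A,B}P_{\cU}^{\otimes\ell}\ket{A,B}>0$.

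Third, I rewrite the ratio as a function of $p_{\cU}^{(\ell)}$ alone. Put $Q=I-P_{\cU}^{\otimes\ell}-\Pi_{\cG}^{\otimes\ell}\ge0$, the projector onto mixed sectors. Then $p_{\cG}=1-p_{\cU}-q$ with $q\ge0$, so
\[
\frac{p_{\cU}}{p_{\cG}}\ge\frac{p_{\cU}}{1-p_{\cU}}\ge\frac{\epsilon_\ell}{1-\epsilon_\ell}>0
\]
whenever $p_{\cG}>0$. This shows the infimum is positive.

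For attainment I will not rely on this cruder bound. Instead I take the minimum directly. The set $\{p_{\cG}\ge\eta\}$ is compact, and the ratio is continuous on it. Near $p_{\cG}\to0$ the ratio is at least $\epsilon_\ell/p_{\cG}\to\infty$. Hence a minimizer exists on the region $p_{\cG}>0$, and such a region is nonempty: the product vector $\ket{s}^{\otimes\ell}$ has $p_{\cG}=0$, but some tensor of deleted modes $\ket{e_{\mathbf t}}^{\otimes\ell}$, suitably perturbed, has positive overlap with $\Pi_{\cG}^{\otimes\ell}$. Simplest is to note that $\Pi_{\cG}^{\otimes\ell}\neq0$ is not orthogonal to all product vectors, because product vectors span the whole space.

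Upper bound $\kappa\le1$: I average over a product set. Take $\ket{A,B}=\ket{e_{\mathbf t}}^{\otimes\ell}$ regrouped by parties, which is fully product. For it, $p_{\cU}^{(\ell)}=(1-p^{-N})^{\ell}\cdot$ (contributions of the $P_{\cU}$ part) and $p_{\cG}=(1-p^{-N})^\ell$. More precisely, $\bra{e_{\mathbf t}}\Pi_{\cG}\ket{e_{\mathbf t}}=1-p^{-N}$ and $\bra{e_{\mathbf t}}P_{\cU}\ket{e_{\mathbf t}}=p^{-N}$. This gives ratio
\[
\Bigl(\frac{R}{K}\cdot\frac{p^{-N}}{1-p^{-N}}\Bigr)^{\ell}=\Bigl(\frac{1}{K}\Bigr)^{\ell}\le1,
\]
using $R=p^N-1$.

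For the inequality, I decompose an $X|Y$-separable effect as $E=\sum_s\omega_s\proj{A_s,B_s}$ with normalized product vectors. Termwise, either $p_{\cG,s}=0$, in which case the inequality is trivial, or $p_{\cU,s}\ge\kappa(K/R)^\ell p_{\cG,s}$ by definition of the minimum. Multiplying by $\omega_s/K^\ell$ and summing gives Eq.~\eqref{eq:tensor-slope}. Sharpness follows by taking $E$ to be the projector onto a minimizer. The step I expect to need most care is existence of the minimizer; the boundary behaviour argument above handles it.

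The PPT statement is immediate from Proposition~\ref{prop:supp-pt}: $P_{\cU}^{\otimes\ell}$ is a tensor product of $T_X$-invariant operators. Perfect discrimination follows because $\rho_{\cG}^{\otimes\ell}$ is supported in $\ran\Pi_{\cG}^{\otimes\ell}\subseteq\ker P_{\cU}^{\otimes\ell}$, while $\rho_{\cU}^{\otimes\ell}$ is supported in $\ran P_{\cU}^{\otimes\ell}$.
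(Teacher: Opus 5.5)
Your proposal is correct. Most of it coincides with the paper's proof:
\begin{itemize}
\item positivity of $\epsilon_{\ell}$ via Corollary~\ref{cor:supp-partywise-stability} and compactness;
\item attainment, by confining a minimizing sequence to a compact set $\{p_{\cG}^{(\ell)}\geq\eta\}$ using $p_{\cU}^{(\ell)}\geq\epsilon_\ell$;
\item termwise summation over a separable decomposition of $E$;
\item sharpness via the projector onto a minimizer.
\end{itemize}

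The genuine difference is the upper bound $\kappa_{\ell,X|Y}\leq1$. The paper averages $p_{\cU}^{(\ell)}-(K/R)^\ell p_{\cG}^{(\ell)}$ over independent Haar-random $\ket A,\ket B$ and notes that this average vanishes. It then argues by continuity, and positive Haar measure of open sets, that the minimum ratio cannot exceed $(K/R)^\ell$. You instead evaluate at the single fully product vector $\ket{e_{\mathbf t}}^{\otimes\ell}$, using $\bra{e_{\mathbf t}}P_{\cU}\ket{e_{\mathbf t}}=\abs{\braket{s}{e_{\mathbf t}}}^2=p^{-N}$.

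The Haar argument is structure-free: it uses only the traces $K^\ell$ and $R^\ell$, and the same averaging also shows the admissible set is nonempty. Your test vector exploits the construction and buys more:
\begin{itemize}
\item it gives $\kappa_{\ell,X|Y}\leq K^{-\ell}$ uniformly over all bipartitions, hence, for instance, $\xi_{X|Y}\geq\log K$;
\item it certifies nonemptiness of the admissible set at the same time;
\item it avoids the measure-theoretic step entirely.
\end{itemize}
Your lower bound $\epsilon_\ell/(1-\epsilon_\ell)$, obtained from the mixed-sector projector $Q\geq0$, is also slightly sharper than the paper's bound $\epsilon_\ell$, though either suffices for positivity.

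Two small clean-ups:
\begin{itemize}
\item To conclude that $\ker P_{\cU}^{\otimes\ell}$ contains no $X|Y$ product vector, the inclusion you need is $\Span\,\cU_{\cC}^{\otimes\ell}\subseteq\ran P_{\cU}^{\otimes\ell}$, not the reverse one you state. Both hold, since the two spaces coincide, so nothing breaks.
\item The garbled first expression for $p_{\cU}^{(\ell)}$ at the test vector should simply read $p^{-N\ell}$.
\end{itemize}
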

	
	\begin{proof}
		By Corollary~\ref{cor:supp-partywise-stability},
		$\cU_{\cC}^{\otimes\ell}$ is a bipartite UPB with respect to $X|Y$. Hence
		\begin{equation}
			\epsilon_{\ell,X|Y}
			=
			\min_{\norm A=\norm B=1}
			p_{\cU}^{(\ell)}(A,B)
			>
			0.
			\label{eq:supp-copy-product-gap}
		\end{equation}
		The set of normalized $X|Y$ product vectors is compact, and both $p_{\cU}^{(\ell)}$ and
		$p_{\cG}^{(\ell)}$ are continuous on it. The admissible set $p_{\cG}^{(\ell)}>0$ is nonempty:
		otherwise the Haar average of $p_{\cG}^{(\ell)}$ computed below would vanish. Choose
		one admissible point to see that the infimum of $p_{\cU}^{(\ell)}/p_{\cG}^{(\ell)}$ is finite,
		and let $C<\infty$ bound the ratios along a minimizing sequence. Since
		$p_{\cU}^{(\ell)}\geq\epsilon_{\ell,X|Y}$, every member of that sequence satisfies
		\begin{equation}
			p_{\cG}^{(\ell)}
			\geq
			\frac{\epsilon_{\ell,X|Y}}{C}.
			\label{eq:supp-ratio-away-from-boundary}
		\end{equation}
		It therefore lies in a compact subset on which $p_{\cG}^{(\ell)}$ is bounded away
		from zero. A convergent subsequence attains the minimum. Moreover,
		$0<p_{\cG}^{(\ell)}\leq1$ and $p_{\cU}^{(\ell)}\geq\epsilon_{\ell,X|Y}>0$ at every admissible
		point, so the attained minimum is strictly positive.
		
		To prove the upper bound, average over independent Haar-random normalized
		$\ket A$ and $\ket B$. Since the average rank-one projector is the
		maximally mixed state on $\cH_X^{\otimes\ell}\otimes\cH_Y^{\otimes\ell}$,
		\begin{equation}
			\mathbb E[p_{\cU}^{(\ell)}]
			=
			\frac{K^\ell}{D^\ell},
			\qquad
			\mathbb E[p_{\cG}^{(\ell)}]
			=
			\frac{R^\ell}{D^\ell}.
			\label{eq:supp-haar-averages}
		\end{equation}
		Set $\theta_\ell=(K/R)^\ell$. If the minimum ratio were larger than
		$\theta_\ell$, the continuous function $p_{\cU}^{(\ell)}-\theta_\ell p_{\cG}^{(\ell)}$ would be nonnegative everywhere and
		strictly positive wherever $p_{\cG}^{(\ell)}>0$. Equation
		\eqref{eq:supp-haar-averages} shows that $p_{\cG}^{(\ell)}>0$ at some point. By
		continuity, $p_{\cU}^{(\ell)}-\theta_\ell p_{\cG}^{(\ell)}$ is then positive on a nonempty open set, which
		has positive Haar measure. Its Haar average would be strictly positive. On
		the other hand, Eq.~\eqref{eq:supp-haar-averages} gives
		\begin{equation}
			\mathbb E\,[p_{\cU}^{(\ell)}-\theta_\ell p_{\cG}^{(\ell)}]
			=
			\frac{K^\ell}{D^\ell}
			-
			\left(\frac KR\right)^\ell
			\frac{R^\ell}{D^\ell}
			=0,
			\label{eq:supp-haar-zero-difference}
		\end{equation}
		a contradiction. Thus the minimum ratio is at most $(K/R)^\ell$, proving
		$\kappa_{\ell,X|Y}\leq1$.
		
		Let an $X|Y$-separable effect be decomposed as
		\begin{equation}
			E
			=
			\sum_s\omega_s\proj{A_s,B_s},
			\qquad
			\omega_s>0,
			\label{eq:supp-separable-effect}
		\end{equation}
		with normalized product vectors. The definition of
		$\kappa_{\ell,X|Y}$ gives, term by term,
		\begin{equation}
			\frac{p_{\cU}^{(\ell)}(A_s,B_s)}{K^\ell}
			\geq
			\kappa_{\ell,X|Y}
			\frac{p_{\cG}^{(\ell)}(A_s,B_s)}{R^\ell}.
			\label{eq:supp-termwise-slope}
		\end{equation}
		If $p_{\cG}^{(\ell)}(A_s,B_s)=0$, the right-hand side is zero and the inequality is
		automatic. Summing Eq.~\eqref{eq:supp-termwise-slope} over $s$ gives
		Eq.~\eqref{eq:tensor-slope}.
		
		Finally, let $\ket{A_*,B_*}$ be a normalized product vector attaining the
		minimum ratio; its overlap with the complementary subspace is positive by the definition of the minimization domain. For
		any $0<\lambda\leq1$, the operator
		$E_*=\lambda\proj{A_*,B_*}$ is a valid effect and saturates the termwise
		inequality. Hence the constant is sharp.
	\end{proof}

	It remains to derive the one-copy expression, submultiplicativity, and the
	existence of the exponent stated in the Letter.
	
	For $\ell=1$,
	\begin{equation}
		p_{\cG}^{(1)}(A,B)=1-p_{\cU}^{(1)}(A,B),
		\label{eq:supp-onecopy-complement}
	\end{equation}
	and 
	$\epsilon_{X|Y} <1$, so a minimizer of $p_{\cU}^{(1)}$
	has positive complementary overlap.  Since $p/(1-p)$ is increasing on
	$[0,1)$, minimizing the ratio gives Eq.~\eqref{eq:one-copy-slope}. If minimizers for $\ell$ and $r$ copies are combined according to the original party grouping, then
	\begin{equation}
		p_{\cU}^{(\ell+r)}=p_{\cU}^{(\ell)}p_{\cU}^{(r)},
		\qquad
		p_{\cG}^{(\ell+r)}=p_{\cG}^{(\ell)}p_{\cG}^{(r)}.
		\label{eq:supp-probability-product}
	\end{equation}
	Substitution into Eq.~\eqref{eq:tensor-slope-definition} gives
	Eq.~\eqref{eq:slope-submultiplicative}. Thus
	$a_\ell=-\log\kappa_{\ell,X|Y}$ is superadditive:
	\begin{equation}
		a_{\ell+r}\geq a_\ell+a_r.
		\label{eq:supp-superadditive}
	\end{equation}
	The superadditive form of Fekete's lemma yields
	\begin{equation}
		\lim_{\ell\to\infty}\frac{a_\ell}{\ell}
		=
		\sup_{\ell\geq1}\frac{a_\ell}{\ell},
		\label{eq:supp-fekete}
	\end{equation}
	possibly $+\infty$, which proves Eq.~\eqref{eq:slope-exponent}.
	
	Finally, $\rho_{\cU}^{\otimes\ell}$ is supported on
	$P_{\cU}^{\otimes\ell}$ while
	$\rho_{\cG}^{\otimes\ell}$ is supported on
	$\Pi_{\cG}^{\otimes\ell}$, which is orthogonal to
	$P_{\cU}^{\otimes\ell}$. Hence the measurement in
	Eq.~\eqref{eq:tensor-ppt-measurement} distinguishes them perfectly.
	Equation~\eqref{eq:pt-invariance} makes both effects invariant under partial transposition across every bipartition. By Eq.~\eqref{eq:tensor-slope}, no
	measurement separable across a bipartition can be perfect.

	\begin{proof}[Completion of the proof of Theorem~\ref{thm:tensor-separation}]
		Corollary~\ref{cor:supp-partywise-stability} proves that
		$\cU_{\cC}^{\otimes\ell}$ is a GUPB for every finite $\ell$.
		Proposition~\ref{prop:supp-finite-copy-slope} proves
		$0<\kappa_{\ell,X|Y}\leq1$, the separable-effect inequality
		\eqref{eq:tensor-slope}, and attainment of the constant.
		Equation~\eqref{eq:supp-onecopy-complement} gives the one-copy expression
		\eqref{eq:one-copy-slope}, while
		Eq.~\eqref{eq:supp-probability-product} gives
		\eqref{eq:slope-submultiplicative} and hence the existence of the exponent
		\eqref{eq:slope-exponent} by Fekete's lemma.
		
		It remains only to verify the perfect measurement.  The state
		$\rho_{\cU}^{\otimes\ell}$ is supported on
		$\ran P_{\cU}^{\otimes\ell}$, whereas
		$\rho_{\cG}^{\otimes\ell}$ is supported on
		$\ran\Pi_{\cG}^{\otimes\ell}$, which is orthogonal to
		$\ran P_{\cU}^{\otimes\ell}$.  Therefore
		$\{P_{\cU}^{\otimes\ell},I-P_{\cU}^{\otimes\ell}\}$ distinguishes the two
		states with certainty.  Equation~\eqref{eq:pt-invariance} implies
		$(P_{\cU}^{\otimes\ell})^{T_X}=P_{\cU}^{\otimes\ell}$ for every
		bipartition, and hence the complementary effect is invariant as well.
		This completes every assertion of Theorem~\ref{thm:tensor-separation}.
	\end{proof}
	
\end{document}